\documentclass[11pt,a4paper]{article}
\usepackage{jheppub}

\usepackage{amsmath,amssymb,amsthm,mathtools}
\usepackage{bm}
\usepackage{tikz-cd}
\usepackage{booktabs}
\usepackage{array}

\hypersetup{colorlinks=true,linkcolor=blue,citecolor=blue,urlcolor=blue}

\newtheorem{theorem}{Theorem}[section]
\newtheorem{proposition}[theorem]{Proposition}

\newtheorem{corollary}[theorem]{Corollary}
\newtheorem{definition}[theorem]{Definition}
\newtheorem{assumption}[theorem]{Assumption}
\newtheorem{remark}[theorem]{Remark}
\newtheorem{construction}[theorem]{Construction}
\newtheorem{example}[theorem]{Example}

\makeatletter
\@ifundefined{widetext}{\newenvironment{widetext}{\par\medskip}{\par\medskip}}{}
\makeatother

\newcommand{\V}{\mathcal V}
\newcommand{\Van}{\mathcal V_{\mathrm{an}}}
\newcommand{\Valg}{\mathcal V_{\mathrm{alg}}}
\newcommand{\Fil}{\mathrm{Fil}}
\newcommand{\Fact}{\mathrm{Fact}}

\newcommand{\Disk}{\mathrm{Disk}}

\newcommand{\Core}{\mathrm{Core}}
\newcommand{\DIS}{\mathrm{DIS}}
\newcommand{\coll}{\mathrm{coll}}
\newcommand{\hard}{\mathrm{hard}}

\newcommand{\Meas}{\mathrm{Meas}}
\newcommand{\spc}{\mathrm{sp}_{\coll}}
\newcommand{\id}{\mathrm{id}}
\newcommand{\coker}{\operatorname{coker}}

\newcommand{\im}{\operatorname{im}}

\newcommand{\Hom}{\operatorname{Hom}}

\newcommand{\Q}{\mathbb Q}
\newcommand{\R}{\mathbb R}

\newcommand{\hot}{\widehat\otimes}

\newcommand{\bO}{\mathcal O}
\newcommand{\A}{\mathcal A}
\newcommand{\Rreg}{\mathcal R}
\newcommand{\Ucoll}{U_{\coll}}

\newcommand{\order}{\mathcal O}
\newcommand{\pp}{\mathrm{pp}}
\newcommand{\Coeff}{\mathsf C}
\newcommand{\PDF}{\mathsf F}
\newcommand{\Sch}{\mathsf S}
\newcommand{\PSS}{\mathsf{PSS}}
\newcommand{\Lan}{\mathsf L}

\newcommand{\MS}{\overline{\mathrm{MS}}}

\DeclareMathOperator*{\coeq}{coeq}

\DeclareMathOperator{\Star}{Star}
\DeclareMathOperator{\Link}{Link}
\DeclareMathOperator{\Cone}{Cone}

\title{Scheme-invariant stratified factorization algebras for inclusive deep inelastic scattering}

\author{Dustin Keller}
\affiliation{University of Virginia}

\emailAdd{dustin@virginia.edu}

\abstract{Inclusive deep inelastic scattering factorization combines two features that are often treated separately: an asymptotic reconstruction of the current-current matrix element from hard and long-distance data, and an invariance under finite changes of collinear scheme or operator basis.  We formulate these two features as a single proof object.  The construction packages the leading-region analysis, overlap subtraction, Wilson-line reduction, finite scheme kernels and physical measurement into a typed, filtered structure on a compactified space of asymptotic regimes.  Its central carrier is the balanced hard-collinear core over the interface algebra of finite scheme transformations.  The hard QCD input is the construction of a scheme-balanced comparison map from this core to the collinear collar of the regime algebra.  Once this comparison is an equivalence through the chosen power accuracy and the measurement descends to convolution, the standard DIS convolution formula follows formally and independently of the chosen scheme presentation.  We separate this formal implication from the analytic QCD obligations needed to construct the collar equivalence, describe Collins-style subtraction as descent and M\"obius inversion on the region poset, and give a finite check relating $\overline{\mathrm{MS}}$ and DIS presentations.  The framework is intended as proof infrastructure rather than as a new calculation of DIS coefficient functions.  It supplies diagnostics for missing regions, nonclosed operator sectors, nonbalanced measurements and failed collar equivalences, and it gives a typed interface for future proof-assistant and machine-learning implementations of factorization workflows.}

\keywords{Deep Inelastic Scattering or Small-x Physics, Factorization, Renormalization Group, Parton Distributions, Higher-Order Perturbative Calculations}

\begin{document}

\maketitle

\section{Introduction}
\label{sec:introduction}

Inclusive deep inelastic scattering (DIS) has long served as the cleanest arena in which short-distance QCD, parton distributions, and operator product methods meet.  In the Bjorken limit,
\begin{equation}
Q^2=-q^2\to \infty,\qquad x_B=\frac{Q^2}{2P\cdot q}\quad \text{fixed},
\end{equation}
structure functions admit an asymptotic expansion whose leading-power term has the familiar collinear-factorized form
\begin{align}
F_i(x_B,Q^2)
&=\sum_{a\in \mathcal I}\int_{x_B}^1\frac{dz}{z}\,
\nonumber\\
&\quad\times C_{i,a}(z,Q,\mu)
\nonumber\\
&\quad\times f_{a/H}\!\left(\frac{x_B}{z},\mu\right)
+R_i(x_B,Q^2),
\label{eq:intro-standard}
\end{align}
with $R_i=\order((\Lambda_{\mathrm{QCD}}/Q)^{p})$ at the claimed power accuracy.  The coefficient functions $C_{i,a}$ are short-distance quantities, while $f_{a/H}$ are universal long-distance parton distribution functions (PDFs).  Equation~\eqref{eq:intro-standard} is the practical engine behind much of QCD phenomenology, but as a mathematical statement it suppresses several pieces of structure that are essential in an all-orders proof.

First, by a DIS factorization proof we mean the all-orders asymptotic construction that begins with the forward Compton amplitude, identifies the pinch and scaling regions of renormalized graphs, constructs approximators in those regions, subtracts their overlaps, uses Ward identities to reduce collinear attachments to Wilson lines, and bounds the remainder.  It is not simply a manipulation of the final convolution formula.  In this construction the graph expansion is organized by hard, collinear, soft, jet and overlap configurations.  This region-theoretic view appears already in the Libby-Sterman analysis of high-energy graphs and in the Collins-Soper-Sterman factorization program, and is developed systematically in modern treatments of perturbative QCD factorization~\cite{LibbySterman1978a,LibbySterman1978b,CSS1989,CSS2004,Collins2011,Sterman1993}.  Related asymptotic-expansion methods, including sectors, regions, and blowup-like decompositions of singular momentum configurations, also appear in the literature on Feynman integrals and the method of regions~\cite{Hepp1966,Speer1975,BenekeSmirnov1998,Smirnov2002}.  In DIS, Ward identities and eikonal approximations turn suitable classes of gluon attachments into Wilson lines, producing gauge-invariant PDF operators~\cite{CollinsSoper1982,Collins2011}.  Inclusivity then plays a decisive role in the cancellation or absorption of soft final-state effects.

Second, the two factors in Eq.~\eqref{eq:intro-standard} are not separately physical.  Finite changes of factorization scheme or finite renormalized-operator basis changes transform the PDF multiplet and coefficient multiplet inversely,
\begin{equation}
\bm f' = Z\circledast \bm f,
\qquad
\bm C'_i=\bm C_i\circledast Z^{-1},
\label{eq:intro-scheme-change}
\end{equation}
leaving $\bm C_i\circledast \bm f$ invariant through the stated accuracy.  The scheme dependence is present already in the leading-twist operator product expansion (OPE) and in the perturbative evolution of parton densities, whose early development includes the work of Wilson, Gross and Wilczek, Georgi and Politzer, Gribov and Lipatov, Altarelli and Parisi, Dokshitzer, and Curci-Furmanski-Petronzio~\cite{Wilson1969,GrossWilczek1974,GeorgiPolitzer1976,GribovLipatov1972,AltarelliParisi1977,Dokshitzer1977,CurciFurmanskiPetronzio1980}.  Thus a formal statement of factorization should distinguish the presentation-dependent pair $(\bm C,\bm f)$ from the invariant composite that it presents.

Third, a factorization theorem is reusable only if its proof data are separated from its presentation.  Standard approaches such as CSS factorization, OPE matching, and soft-collinear effective theory (SCET) give powerful process-specific languages for constructing coefficients, operators, and evolution kernels~\cite{BauerFlemingLuke2000,BauerFlemingPirjolStewart2001,BauerPirjolStewart2002}.  The aim here is not to replace these analytic or effective-theory tools.  Rather, the aim is to identify the typed mathematical object that their successful use constructs.  Once that object exists, factorization, scheme invariance, and measurement compatibility become formal consequences of its maps and universal properties.  This typed viewpoint is also useful when some of the objects are numerical or learned rather than written in closed analytic form.  Modern PDF extractions already use neural networks and deep-learning-inspired optimization as flexible parametrizations of long-distance functions~\cite{ForteGarridoLatorrePiccione2002,DelDebbioEtAl2007NNPDF,BallEtAl2009NNPDF,CarrazzaCruzMartinez2019,BallEtAl2022NNPDF40}; the present framework specifies the module actions, balancing relations and measurement maps that such parametrizations must respect.  In this sense it is compatible with categorical approaches to deep learning, where architectures are treated as typed compositional maps subject to algebraic constraints~\cite{GavranovicEtAl2024CategoricalDL}.

The central object proposed in this paper is a \emph{DIS factorization proof object}.  At order $N$ it consists of
\begin{equation}
(K_{\DIS},\Rreg_{\DIS}^{\pp},\A_{\DIS},\Sch^{\leq N},\Coeff^{\leq N},\PDF^{\leq N},
\Phi_{\coll}^{\Core},\Meas),
\label{eq:proof-object-intro}
\end{equation}
where $K_{\DIS}$ is a compactified stratified regime space, $\Rreg_{\DIS}^{\pp}$ is a proof-level region poset, $\A_{\DIS}\in\Fact(K_{\DIS};\Fil(\V))$ is a constructible filtered regime factorization algebra, $\Sch^{\leq N}$ is a collinear interface algebra, $\Coeff^{\leq N}$ and $\PDF^{\leq N}$ are right and left modules over it, $\Phi_{\coll}^{\Core}$ is a scheme-balanced collar comparison map, and $\Meas$ is the physical measurement to a convolution algebra.  We prove that a valid object of this kind implies the measured DIS factorization formula.  The analytic content of QCD is then concentrated in the verification that the particular proof object extracted from renormalized perturbative DIS is valid.

The proof object is not a new approximation to DIS.  It is a typed container for the data produced by a standard factorization proof.  A conventional proof supplies region types, approximators, subtraction maps, Ward-identity reductions, finite scheme kernels, hard coefficient data, PDF or light-ray correlator data, and a measurement projection.  The proof object asks whether these data close under finite scheme transformations, satisfy descent over overlaps, and define a collar equivalence through the claimed power accuracy.  If they do, the measured convolution formula follows formally.

On the geometric side, the finite pattern of leading regions and overlaps is represented by a compactified stratified regime space $K_{\DIS}$.  The hard regime is an open stratum, the incoming-collinear regime is a boundary stratum, and soft/jet/overlap regions appear as deeper strata in a proof-level refinement.  A constructible filtered factorization algebra
\begin{equation}
\A_{\DIS}\in \Fact(K_{\DIS};\Fil(\V))
\label{eq:intro-regime-algebra}
\end{equation}
assigns algebraic data to regime neighborhoods and records power accuracy through its filtration.  The notation is inspired by factorization algebras in quantum field theory and their local-to-global descent properties~\cite{CostelloGwilliam1,CostelloGwilliam2}.  Because the base space here is a stratified space of asymptotic regimes rather than physical spacetime, constructibility, conical neighborhoods, exit-path categories, and factorization-homology ideas provide useful language~\cite{KashiwaraSchapira1990,GoreskyMacPherson1988,AyalaFrancisTanaka2015,BeilinsonDrinfeld2004}.

On the scheme-invariance side, the interface-algebra concept of Ref.~\cite{Keller2026Core} is used.  At a fixed accuracy $N$, admissible finite collinear counterterms and operator-mixing kernels form an algebra object $\mathsf S^{\leq N}$.  Coefficients form a right module $\mathsf C^{\leq N}$, PDFs/correlators form a left module $\mathsf F^{\leq N}$, and the invariant carrier is the relative tensor product
\begin{equation}
\Core^{\leq N}_{\DIS}:=
\mathsf C^{\leq N}\otimes_{\mathsf S^{\leq N}}\mathsf F^{\leq N}.
\label{eq:intro-core}
\end{equation}
This quotient imposes the balancing relation
\begin{equation}
(c\cdot s)\otimes f\sim c\otimes (s\cdot f),
\label{eq:intro-balance}
\end{equation}
which is the categorical form of the inverse transformation in Eq.~\eqref{eq:intro-scheme-change}.  The resulting object is independent of the chosen presentation of the factorized expression.

The central map in this paper is the scheme-balanced collar comparison
\begin{equation}
\Phi^{\Core}_{\coll}:
\Core^{\leq N}_{\DIS}\longrightarrow \A_{\DIS}(\Ucoll),
\label{eq:intro-collar-map}
\end{equation}
where $\Ucoll$ is a collar neighborhood of the incoming-collinear boundary stratum.  We define order-$N$ DIS factorization to mean that Eq.~\eqref{eq:intro-collar-map} is an $N$-equivalence: after quotienting by the $(N+1)$st filtration piece, it becomes an isomorphism.  The hard analytic QCD work is precisely the construction of this map from leading-region approximators, Wilson-line reductions, soft cancellations, and overlap subtractions, and the proof that it is an $N$-equivalence.  Once that is available, the factorized DIS formula follows by a formal diagram chase.

The contribution of this work is a formal interface for building and validating factorization proofs.  Theorems below prove that a scheme-invariant collar equivalence plus a convolution-compatible measurement implies the usual DIS formula.  The descent section shows how Collins-style subtraction is naturally expressed as M\"obius inversion on the region poset.  The QCD realization section isolates the analytic assumptions whose verification would discharge the collar-equivalence obligation.  Sec.~\ref{sec:proof-protocol} makes the operational claim explicit: the framework is a proof protocol that separates QCD input from formal consistency checks.  Sec.~\ref{sec:scheme-demo} previews the formalism in a familiar finite setting without moving the example out of its natural place in the proof protocol: the usual change between an $\overline{\mathrm{MS}}$ presentation and a DIS-scheme presentation maps to the same element of the balanced core, while an insufficiently closed parton sector fails to define the core.  Sec.~\ref{subsec:obstruction-diagnostics} explains how failed proof objects can localize missing physics.  This organization is also compatible with computer-assisted formalization: the verification target is a finite list of objects, maps, filtrations, coequalizers, and commutative diagrams.

For orientation, Table~\ref{tab:dictionary} gives a physics-to-formalism dictionary.  The table is not an analogy; the entries on the right are the objects whose existence or compatibility is required by the definitions below.  The later proof-protocol table refines this dictionary into a sequence of construction and verification steps that a DIS factorization proof must complete.

\begin{table*}[t]
\caption{Dictionary between standard DIS factorization language and the proof-object language used in this paper.}
\label{tab:dictionary}
\begin{tabular}{@{}ll@{}}
\toprule
\parbox[t]{0.30\textwidth}{QCD/factorization language} & \parbox[t]{0.62\textwidth}{Formal object or condition}\\
\midrule
\parbox[t]{0.30\textwidth}{Hard coefficient functions $C_{i,a}$} & \parbox[t]{0.62\textwidth}{Right module $\Coeff^{\leq N}$ over the interface algebra $\Sch^{\leq N}$}\\[0.6ex]
\parbox[t]{0.30\textwidth}{PDFs and light-ray correlators $f_{a/H}$} & \parbox[t]{0.62\textwidth}{Left module $\PDF^{\leq N}$; closure under $\Sch^{\leq N}$ is part of the proof datum}\\[0.6ex]
\parbox[t]{0.30\textwidth}{Finite factorization-scheme change $Z$} & \parbox[t]{0.62\textwidth}{Invertible element/morphism of $\Sch^{\leq N}$ acting oppositely on $\Coeff^{\leq N}$ and $\PDF^{\leq N}$}\\[0.6ex]
\parbox[t]{0.30\textwidth}{Scheme-independent recomposition $\sum_a C_a\otimes f_a$} & \parbox[t]{0.62\textwidth}{Balanced core $\Coeff^{\leq N}\otimes_{\Sch^{\leq N}}\PDF^{\leq N}$}\\[0.6ex]
\parbox[t]{0.30\textwidth}{Leading regions and pinch surfaces} & \parbox[t]{0.62\textwidth}{Strata of $K_{\DIS}$, indexed by the proof-level poset $\Rreg^{\pp}_{\DIS}$}\\[0.6ex]
\parbox[t]{0.30\textwidth}{Overlap subtraction} & \parbox[t]{0.62\textwidth}{Descent/Cech compatibility and M\"obius inversion in the incidence algebra of $\Rreg^{\pp}_{\DIS}$}\\[0.6ex]
\parbox[t]{0.30\textwidth}{Power-suppressed remainder} & \parbox[t]{0.62\textwidth}{Filtration piece $F^{N+1}$ and quotient functor $\pi_{\leq N}$}\\[0.6ex]
\parbox[t]{0.30\textwidth}{Factorization theorem} & \parbox[t]{0.62\textwidth}{The balanced collar map $\Phi^{\Core}_{\coll}$ is an $N$-equivalence and measurement descends to the core}\\[0.6ex]
\parbox[t]{0.30\textwidth}{Incomplete or inconsistent proof} & \parbox[t]{0.62\textwidth}{Failure of a typed condition: missing stratum, nonclosed module sector, nonbalanced measurement, noncommuting descent diagram, or failed collar equivalence}\\
\bottomrule
\end{tabular}
\end{table*}

Readers primarily interested in QCD factorization can read Secs.~\ref{sec:dis}, \ref{sec:regime-construction}, \ref{sec:diagrammatic-realization}, \ref{sec:proof-protocol}, \ref{sec:scheme-demo}, \ref{sec:qcd-realization} and \ref{sec:discussion} first.  Readers primarily interested in the formal statement can read Secs.~\ref{sec:filtered}--\ref{sec:formal-theorem} and then return to the QCD realization in Sec.~\ref{sec:qcd-realization}.

\section{Inclusive DIS and scheme covariance}
\label{sec:dis}

\subsection{Kinematics and hadronic tensor}

Inclusive DIS is
\begin{equation}
\ell(l)+H(P)\to \ell(l')+X,
\end{equation}
with
\begin{equation}
q=l-l',\qquad Q^2=-q^2>0,
\qquad x_B=\frac{Q^2}{2P\cdot q}.
\end{equation}
The hadronic tensor is
\begin{equation}
W^{\mu\nu}(P,q)=\frac{1}{4\pi}\int d^4x\,e^{iq\cdot x}
\langle H(P)|[J^\mu(x),J^\nu(0)]|H(P)\rangle,
\label{eq:hadronic-tensor}
\end{equation}
or equivalently the discontinuity of the forward Compton amplitude.  Projection onto Lorentz structures gives scalar structure functions $F_i$.  For $x_B\in(0,1)$ away from endpoints and $Q\gg \Lambda_{\mathrm{QCD}}$, leading-twist collinear factorization gives Eq.~\eqref{eq:intro-standard} with parton channels $a\in\mathcal I$.

A quark PDF can be represented by the gauge-invariant light-ray matrix element
\begin{align}
 f_{q/H}(x,\mu)
&=\int\frac{d\lambda}{2\pi}\,e^{-ix\lambda P\cdot n}
\nonumber\\
&\quad\times\big\langle H(P)\big|
\bar\psi(\lambda n)W(\lambda n,0)
\nonumber\\
&\qquad \gamma\cdot n\,\psi(0)\big|H(P)\big\rangle_\mu .
\label{eq:pdf-def}
\end{align}
Here $n$ is lightlike and $W(\lambda n,0)$ is the Wilson line.  Gluon and polarized distributions are analogous.  The exact spin projection, normalization, and Wilson-line orientation are conventional; the formal structure below requires only that the retained correlators form a sector closed under the relevant finite collinear renormalizations.

\subsection{Parton multiplets and finite kernels}

Let $D$ denote a convolution algebra of distributions on the momentum-fraction monoid $G=(0,1]$, with multiplication
\begin{equation}
(u*v)(x)=\int_x^1\frac{dz}{z}\,u(z)v(x/z).
\label{eq:conv}
\end{equation}
For a finite channel set $\mathcal I$, write
\begin{equation}
\bm f=(f_a)_{a\in\mathcal I}\in D^{\mathcal I},
\qquad
\bm C_i=(C_{i,a})_{a\in\mathcal I}\in (D^{\mathcal I})^\vee.
\end{equation}
A finite scheme transformation is a matrix $Z=(Z_{ab})$ of kernels.  The left action on PDFs and the right action on coefficients are
\begin{equation}
\begin{aligned}
(Z\circledast \bm f)_a&=\sum_b Z_{ab}*f_b,\\
(\bm C_i\circledast Z)_b&=\sum_a C_{i,a}*Z_{ab}.
\end{aligned}
\label{eq:module-actions}
\end{equation}
If $Z$ is convolution-invertible in the retained sector, Eq.~\eqref{eq:intro-scheme-change} leaves
\begin{equation}
\langle \bm C_i,\bm f\rangle:=\sum_a C_{i,a}*f_a
\label{eq:pairing}
\end{equation}
invariant.  The relative tensor product used below is the universal object through which all such balanced pairings factor.

\section{Target categories, filtrations, and completed tensors}
\label{sec:filtered}

\subsection{Formal target}

Let $(\V,\otimes,\mathbf 1)$ be an additive symmetric monoidal category.  For the algebraic statements we assume that $\V$ has the finite colimits used below and that tensoring preserves the relevant cokernels and coequalizers separately in each variable.  When $\pi_{\leq N}$ is used as an exact truncation functor we assume the corresponding cokernels exist.  A model example for mechanized formalization is
\begin{equation}
\Valg=\mathrm{Ch}_k,
\end{equation}
chain complexes over a field $k$.  In this algebraic model all tensor products are ordinary algebraic tensor products.  We introduce this model now because the internal result of Sec.~\ref{sec:formal-theorem} uses only this algebraic layer: it is a theorem about maps, truncations and coequalizers, while the QCD sections supply the physical objects that populate those maps.

\begin{definition}[Filtered object]
A filtered object is an object $X\in\V$ with a decreasing sequence of subobjects
\begin{equation}
X=F^0X\supset F^1X\supset F^2X\supset\cdots .
\end{equation}
A morphism $f:X\to Y$ is filtration preserving if $f(F^nX)\subset F^nY$ for all $n$.  The resulting category is $\Fil(\V)$.
\end{definition}

\begin{definition}[$N$-truncation and $N$-equivalence]
For $X\in\Fil(\V)$ define
\begin{equation}
\pi_{\leq N}X:=\coker(F^{N+1}X\hookrightarrow X).
\label{eq:truncation}
\end{equation}
A morphism $f:X\to Y$ is an $N$-equivalence, written $f\simeq_{\leq N}$, if $\pi_{\leq N}f$ is an isomorphism.
\end{definition}

The physical interpretation is that $F^nX$ consists of contributions suppressed by at least the $n$th power of a small parameter, typically $\Lambda_{\mathrm{QCD}}/Q$ or a collar radius $\rho$.  Thus $X\simeq_{\leq N}Y$ means equality modulo $\order((\Lambda_{\mathrm{QCD}}/Q)^{N+1})$.

\begin{remark}[Rees presentation]
Equivalently one may use the Rees object
\begin{equation}
\mathrm{Rees}(X)=\bigoplus_{n\geq0}F^nX\,t^n
\end{equation}
and work modulo $t^{N+1}$.  The quotient notation in Eq.~\eqref{eq:truncation} is closer to the standard ``remainder is power suppressed'' formulation, while the Rees notation is often better for derived or graded refinements.
\end{remark}

\subsection{Analytic target and distributional completion}

The algebraic model is not sufficient by itself for QCD.  Coefficient functions, PDFs, Wilson-line matrix elements, plus-distributions, and convolution kernels are distributional objects, and the tensor product used in recomposition is completed.  We therefore distinguish two layers throughout.  The formal layer proves what follows from a balanced collar equivalence.  The analytic layer specifies the functional-analytic and QCD conditions under which the DIS data actually define that equivalence.

\begin{assumption}[Admissible analytic target]
\label{ass:analytic-target}
An analytic target for DIS is a symmetric monoidal category $(\Van,\hot,\mathbf 1)$ satisfying the following conditions.
\begin{enumerate}
\item It is additive and has the strict cokernels and completed coequalizers used to define filtrations and balanced quotients.
\item The completed tensor product $\hot$ preserves those cokernels and coequalizers in the arguments in which it is used, or the required comparison maps are included explicitly in the datum.
\item It contains the distribution spaces needed for DIS: finite channel sums of $\mathcal D'((0,1])$, matrix-valued convolution kernels, renormalized light-ray operator distributions, and their filtered subquotients.
\item Pushforwards along the multiplication map $m:(0,1]\times(0,1]\to(0,1]$ and the DIS measurement maps are continuous morphisms in $\Van$.
\item The filtrations used for power counting are by closed or strict subobjects so that the quotient $X/F^{N+1}X$ is again an object of $\Van$.
\end{enumerate}
\end{assumption}

Examples to keep in mind are suitable quasi-abelian or bornological categories of complete locally convex spaces, or chain complexes therein, equipped with completed projective tensor products.  Nuclearity hypotheses are often useful because they improve the behavior of completed tensor products with distributions~\cite{Treves1967,Hormander1983}.  We do not require a unique analytic model in this paper.  Rather, Assumption~\ref{ass:analytic-target} records the exact functional-analytic verification required to promote the formal theorem from $\Valg$ to the distributional QCD setting.

Thus every theorem below has two readings.  In $\Valg$ it is an ordinary categorical theorem.  In $\Van$ it is conditional on the continuity, completion, and exactness properties listed in Assumption~\ref{ass:analytic-target}.  This separation is intentional: it prevents the formal factorization statement from hiding nontrivial functional analysis.

In particular, the physical balanced core is not merely the algebraic quotient of vector spaces.  At finite accuracy one must verify that the completed diagram
\begin{equation}
\mathsf C_N\hot\mathsf S_N\hot\mathsf F_N
\rightrightarrows
\mathsf C_N\hot\mathsf F_N
\longrightarrow
\mathsf C_N\hot_{\mathsf S_N}\mathsf F_N
\longrightarrow0
\label{eq:strict-completed-core}
\end{equation}
is a strict coequalizer sequence in $\Van$.  Strictness here means that the quotient topology or bornology on the core is the one used by the subsequent convolution and measurement maps.  This requirement is a genuine part of the QCD realization: the coequalizer does not ``sidestep'' the interaction between tensor products, filtrations, and distributions; it isolates the exact completed quotient whose existence and continuity must be checked.

\subsection{Tensor filtrations and finite-accuracy cores}

When two filtered objects are tensored, the physically natural total filtration is
\begin{equation}
F^n(X\otimes Y)=\sum_{a+b\geq n}\im(F^aX\otimes F^bY\to X\otimes Y),
\label{eq:total-filtration}
\end{equation}
or the corresponding completed image for $\hot$ in $\Van$.  This convention implements power counting: a contribution of order $a$ multiplied by one of order $b$ is of order $a+b$.

However, the comparison map between truncating after tensoring and tensoring after truncation is not generally an isomorphism:
\begin{equation}
\pi_{\leq N}(X\otimes Y)
\not\cong
\pi_{\leq N}X\otimes \pi_{\leq N}Y
\quad \text{in general}.
\label{eq:tensor-warning}
\end{equation}
The reason is $\pi_{\leq N}X\otimes \pi_{\leq N}Y$ can retain mixed terms whose total filtration degree exceeds $N$, while $\pi_{\leq N}(X\otimes Y)$ kills them.  The formalism below avoids relying on an unjustified isomorphism of this kind.  The finite-accuracy core is defined either as a coequalizer in the filtered category and then truncated, or directly as a finite-accuracy coequalizer after applying $\pi_{\leq N}$.

\begin{definition}[Finite-accuracy balanced core]
\label{def:finite-core}
Let $\mathsf C,\mathsf S,\mathsf F\in\Fil(\V)$ be a right module, algebra, and left module.  Put $\mathsf C_N=\pi_{\leq N}\mathsf C$, $\mathsf S_N=\pi_{\leq N}\mathsf S$, $\mathsf F_N=\pi_{\leq N}\mathsf F$, and $M_N=(\mathsf C_N\otimes\mathsf S_N)\otimes\mathsf F_N$.  The order-$N$ core may be defined directly in $\V$ by
\begin{equation}
\Core^{\leq N}(\mathsf C,\mathsf S,\mathsf F)
:=\coeq\bigl(M_N \rightrightarrows \mathsf C_N\otimes\mathsf F_N\bigr),
\label{eq:finite-core}
\end{equation}
where the two arrows are $\rho_N\otimes\id$ and $\id\otimes\lambda_N$.
\end{definition}

If $\pi_{\leq N}$ preserves the filtered coequalizer defining $\mathsf C\otimes_{\mathsf S}\mathsf F$, then Eq.~\eqref{eq:finite-core} is canonically isomorphic to $\pi_{\leq N}(\mathsf C\otimes_{\mathsf S}\mathsf F)$.  If this exactness is unavailable in $\Van$, the object in Eq.~\eqref{eq:finite-core} is taken as part of the finite-order DIS datum.  In either case, the balanced core, not the naive tensor product, is the physical carrier.

For a phenomenological reader, a finite-order DIS datum may be read concretely as the following package: a retained channel and operator basis, finite scheme-change kernels acting on that basis, coefficient functions in the dual module, PDF or light-ray correlator data in the collinear module, a power-counting filtration, and the measurement map that sends a coefficient-correlator representative to the observed structure function.  The categorical language does not replace these familiar ingredients; it specifies how they must compose in order to be scheme independent through order $N$.

\section{Interface algebras and balanced cores}
\label{sec:core}

\subsection{Interface algebra}

\begin{definition}[Collinear interface algebra]
At order $N$, the collinear interface algebra is an algebra object
\begin{equation}
\mathsf S^{\leq N}\in \mathrm{Alg}(\Fil(\V))
\end{equation}
encoding all admissible finite collinear counterterms and operator-mixing kernels that preserve the symmetries, quantum numbers, support properties, and power accuracy of the retained DIS sector.
\end{definition}

In $x$-space, $\mathsf S^{\leq N}$ may be modeled by a block matrix algebra of convolution kernels.  In Mellin moment space it becomes a block matrix algebra acting on the retained local twist/operator sector.  The coefficient object $\mathsf C^{\leq N}$ is a right $\mathsf S^{\leq N}$-module and the PDF/correlator object $\mathsf F^{\leq N}$ is a left $\mathsf S^{\leq N}$-module.  The module actions are the categorical version of Eq.~\eqref{eq:module-actions}.

\subsection{Relative tensor product}

\begin{definition}[Balanced core]
Let $\rho:\mathsf C\otimes\mathsf S\to\mathsf C$ be the right action and $\lambda:\mathsf S\otimes\mathsf F\to\mathsf F$ the left action.  The balanced core is the coequalizer
\begin{equation}
\Core:=\mathsf C\otimes_{\mathsf S}\mathsf F
:=\coeq\left(
(\mathsf C\otimes\mathsf S)\otimes\mathsf F
\substack{\xrightarrow{\rho\otimes\id}\\[-0.4em]\xrightarrow[\id\otimes\lambda]{}}
\mathsf C\otimes\mathsf F
\right).
\label{eq:relative-tensor}
\end{equation}
The quotient map is denoted $q:\mathsf C\otimes\mathsf F\to\Core$.
\end{definition}

In a linear category this is the quotient by all elements
\begin{equation}
(c\cdot s)\otimes f-c\otimes(s\cdot f).
\label{eq:balance-linear}
\end{equation}
The defining diagram is
\begin{equation}
\begin{tikzcd}[column sep=large]
(\mathsf C\otimes\mathsf S)\otimes\mathsf F
\arrow[r,shift left=0.45ex,"\rho\otimes\id"]
\arrow[r,shift right=0.45ex,swap,"\id\otimes\lambda"]
& \mathsf C\otimes\mathsf F
\arrow[r,"q"]
& \mathsf C\otimes_{\mathsf S}\mathsf F .
\end{tikzcd}
\label{eq:coeq-diagram}
\end{equation}

\begin{proposition}[Universal property of the core]
\label{prop:balanced-UP}
For every $X\in\V$, composition with $q$ gives a natural bijection
\begin{equation}
\Hom_{\V}(\mathsf C\otimes_{\mathsf S}\mathsf F,X)
\cong
\Hom^{\mathsf S\text{-bal}}_{\V}(\mathsf C\otimes\mathsf F,X),
\label{eq:balanced-UP}
\end{equation}
where the right-hand side consists of morphisms $\varphi$ satisfying
\begin{equation}
\varphi\circ(\rho\otimes\id)=\varphi\circ(\id\otimes\lambda).
\label{eq:balanced-condition}
\end{equation}
\end{proposition}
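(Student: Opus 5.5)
This is the universal property of a coequalizer, specialized to the parallel pair $(\rho\otimes\id,\id\otimes\lambda)$ of Eq.~\eqref{eq:coeq-diagram}. The plan is to unpack that definition and check naturality; nothing about filtrations, completions or QCD enters.

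\emph{Step 1: the map is well defined.} Define $q^*:\Hom_{\V}(\mathsf C\otimes_{\mathsf S}\mathsf F,X)\to\Hom_{\V}(\mathsf C\otimes\mathsf F,X)$ by $\psi\mapsto\psi\circ q$. Because $q$ coequalizes the pair, $q\circ(\rho\otimes\id)=q\circ(\id\otimes\lambda)$. Precomposing with $\psi$ shows that $\psi\circ q$ satisfies Eq.~\eqref{eq:balanced-condition}. Hence $q^*$ lands in $\Hom^{\mathsf S\text{-bal}}_{\V}(\mathsf C\otimes\mathsf F,X)$.

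\emph{Step 2: surjectivity and injectivity.} Let $\varphi$ be balanced. The defining universal property of the coequalizer in Eq.~\eqref{eq:relative-tensor} gives a unique $\bar\varphi$ with $\bar\varphi\circ q=\varphi$, so $q^*$ is surjective onto balanced maps. Injectivity follows because $q$ is an epimorphism. Indeed, coequalizer maps are epic: if $\psi_1\circ q=\psi_2\circ q$, then this common composite is balanced by Step 1, so uniqueness of the factorization forces $\psi_1=\psi_2$.

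\emph{Step 3: naturality in $X$.} For $g:X\to X'$ we have $(g\circ\psi)\circ q=g\circ(\psi\circ q)$, and postcomposition with $g$ preserves Eq.~\eqref{eq:balanced-condition}. So the bijections commute with $g_*$. In the linear case the same statement reads: a map $\varphi$ factors through $q$ iff it kills the elements~\eqref{eq:balance-linear}.

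\emph{Main obstacle.} The only substantive point is that the coequalizer exists in the target category. In $\Valg$ it is the ordinary quotient. In $\Van$ it is supplied by Assumption~\ref{ass:analytic-target}(1) together with the strictness requirement on Eq.~\eqref{eq:strict-completed-core}. In that setting, $\Hom$ means continuous morphisms, and the argument goes through verbatim once the completed coequalizer is granted.
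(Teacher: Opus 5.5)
Your proposal is correct and follows the paper's proof, which simply cites the universal property of the coequalizer in Eq.~\eqref{eq:relative-tensor}. You unpack that citation into its components: well-definedness of $\psi\mapsto\psi\circ q$, surjectivity and injectivity from existence and uniqueness of the factorization, and naturality in $X$.
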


\begin{proof}
This is the universal property of the coequalizer in Eq.~\eqref{eq:relative-tensor}.
\end{proof}

The same statement can be expressed as a representation theorem for scheme-invariant semantics, following Ref.~\cite{Keller2026Core}.  Let
\begin{equation}
\mathcal B_X(\mathsf C,\mathsf F)=
\{\varphi:\mathsf C\otimes\mathsf F\to X\mid
\varphi(\rho\otimes\id)=\varphi(\id\otimes\lambda)\}
\end{equation}
be the set, or object, of $\mathsf S$-balanced evaluations into $X$.

\begin{proposition}[Core representation and terminal quotient]
\label{prop:core-terminal}
The functor $X\mapsto\mathcal B_X(\mathsf C,\mathsf F)$ is represented by $\mathsf C\otimes_{\mathsf S}\mathsf F$.  Moreover, among quotients $Q$ of the naive composite $\mathsf C\otimes\mathsf F$ through which all balanced evaluations factor, the quotient $q:\mathsf C\otimes\mathsf F\to\mathsf C\otimes_{\mathsf S}\mathsf F$ is terminal: every such quotient maps uniquely to the core whenever it preserves the balanced semantics.
\end{proposition}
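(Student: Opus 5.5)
The first claim is a restatement of Proposition~\ref{prop:balanced-UP}. I will show that $\mathcal B_X(\mathsf C,\mathsf F)$ is a functor of $X$: postcomposition with $g:X\to X'$ sends balanced maps to balanced maps, since $g\varphi(\rho\otimes\id)=g\varphi(\id\otimes\lambda)$. Proposition~\ref{prop:balanced-UP} then gives a bijection
\begin{equation*}
\Hom_{\V}(\mathsf C\otimes_{\mathsf S}\mathsf F,X)\to\mathcal B_X(\mathsf C,\mathsf F),\qquad \psi\mapsto\psi\circ q .
\end{equation*}
This bijection is natural in $X$ because composition is associative: $g\circ(\psi\circ q)=(g\circ\psi)\circ q$. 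So the core represents the functor, and the universal element is $q$ itself. The element $q$ is balanced because it is the coequalizing map in Eq.~\eqref{eq:coeq-diagram}.

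For the terminal-quotient statement, I first make the hypothesis precise. I take a ``quotient through which all balanced evaluations factor and which preserves the balanced semantics'' to be an epimorphism $p:\mathsf C\otimes\mathsf F\to Q$ with two properties. First, $p$ is itself balanced, i.e.\ $p(\rho\otimes\id)=p(\id\otimes\lambda)$. Second, every balanced $\varphi$ factors as $\varphi=\bar\varphi\circ p$.

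The first property alone gives the required map. Applying Proposition~\ref{prop:balanced-UP} to $\varphi=p$ yields a unique $u:\mathsf C\otimes_{\mathsf S}\mathsf F\to Q$ with $u\circ q=p$. This is a map from the core to $Q$, which is the wrong direction for the statement as written, so I need to reconcile the two.

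I read ``maps uniquely to the core'' as asking for a map $v:Q\to\mathsf C\otimes_{\mathsf S}\mathsf F$ compatible with the quotient maps, i.e.\ $v\circ p=q$. Such a $v$ comes from the second property applied to the balanced map $\varphi=q$: it gives some $v$ with $v\circ p=q$. This $v$ is unique because $p$ is an epimorphism, so it is right-cancellable.

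This second reading is the one in which ``terminal'' makes sense. Among balanced quotients under $\mathsf C\otimes\mathsf F$, ordered by factorization, $q$ receives a unique compatible map from every other member, so $q$ is the terminal member. Combined with the first property, we also have $u$ from the core to $Q$. The composites satisfy $u v p=uq=p$ and $v u q=vp=q$, so cancelling the epimorphisms $p$ and $q$ gives $uv=\id$ and $vu=\id$. Hence $Q\cong\mathsf C\otimes_{\mathsf S}\mathsf F$ under $\mathsf C\otimes\mathsf F$. This also shows the core is the unique such quotient up to unique isomorphism, consistent with its being terminal.

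The main obstacle is interpretive rather than technical: fixing the meaning of the phrase ``preserves the balanced semantics'' so that the direction and uniqueness of the map come out correctly. If ``preserves the balanced semantics'' were read to allow unbalanced $p$, then $v$ still exists and is unique, but $Q$ would be a strictly finer quotient than the core. Once the hypotheses are stated as above, each step is a one-line coequalizer or epimorphism cancellation argument.
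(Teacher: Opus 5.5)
Your argument is correct. For representability it matches the paper, which simply cites Proposition~\ref{prop:balanced-UP}; you additionally check functoriality and naturality.

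For terminality the two routes differ. The paper argues that any quotient preserving the balanced evaluations must coequalize $\rho\otimes\id$ and $\id\otimes\lambda$, and therefore ``receives the universal coequalizer map.'' That is exactly your map $u:\mathsf C\otimes_{\mathsf S}\mathsf F\to Q$, so it points from the core to $Q$. The coequalizing claim is also really your first hypothesis, not a consequence of the factorization property. The identity quotient of $\mathsf C\otimes\mathsf F$ lets every balanced evaluation factor through it, but it is not balanced in general.

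Your extra step is what supplies a map in the direction the statement actually asserts. You factor the balanced map $q$ itself through $p$ to get $v$ with $v\circ p=q$, unique because $p$ is epi. This uses only the factorization property. So the non-vacuous form of terminality holds on the larger class that includes unbalanced quotients such as $\id$, and $q$ belongs to that class by Proposition~\ref{prop:balanced-UP}. Adding $u$ and cancelling the epimorphisms $p$ and $q$ then gives the sharper statement $Q\cong\mathsf C\otimes_{\mathsf S}\mathsf F$ under $\mathsf C\otimes\mathsf F$ when both properties hold.

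One wording caveat. Your sentence ``among balanced quotients, $q$ receives a unique compatible map from every other member'' is true only for quotients that also satisfy your second property. Among merely balanced quotients (e.g.\ $\mathsf C\otimes\mathsf F\to 0$) the core is initial, not terminal. That initiality is precisely what the paper's $u$-direction argument establishes.
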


\begin{proof}
Representability is Proposition~\ref{prop:balanced-UP}.  Terminality follows because any quotient preserving all balanced evaluations must coequalize the two action maps; hence it receives the universal coequalizer map.  Conversely, every balanced evaluation factors through the core by Proposition~\ref{prop:balanced-UP}.
\end{proof}

The commutative diagram expressing the factorization of a balanced evaluation is
\begin{equation}
\begin{tikzcd}[column sep=large,row sep=large]
\mathsf C\otimes\mathsf F
\arrow[r,"q"]
\arrow[dr,swap,"\varphi"]
& \mathsf C\otimes_{\mathsf S}\mathsf F
\arrow[d,dashed,"\bar\varphi"]\\
& X .
\end{tikzcd}
\label{eq:balanced-evaluation-diagram}
\end{equation}
In DIS, $X$ will be the convolution target $D$, and $\varphi$ will be the pairing of coefficient kernels with PDFs.

\subsection{Truncation of balanced cores}

\begin{proposition}[Comparison of filtered and finite cores]
\label{prop:truncated-core}
Assume coequalizers in $\Fil(\V)$ are computed levelwise and that $\pi_{\leq N}$ preserves the coequalizer in Eq.~\eqref{eq:relative-tensor}.  Then the canonical comparison map is an isomorphism,
\begin{equation}
\pi_{\leq N}(\mathsf C\otimes_{\mathsf S}\mathsf F)
\xrightarrow{\sim}
(\pi_{\leq N}\mathsf C)
\otimes_{\pi_{\leq N}\mathsf S}
(\pi_{\leq N}\mathsf F).
\label{eq:truncated-core}
\end{equation}
\end{proposition}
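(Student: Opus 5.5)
The plan is to compare the two coequalizer diagrams directly and show that $\pi_{\leq N}$ carries the first onto the second. Applying $\pi_{\leq N}$ to the diagram \eqref{eq:coeq-diagram} gives, by the hypothesis that $\pi_{\leq N}$ preserves this coequalizer, a coequalizer diagram
\begin{equation*}
\pi_{\leq N}\bigl((\mathsf C\otimes\mathsf S)\otimes\mathsf F\bigr)\rightrightarrows\pi_{\leq N}(\mathsf C\otimes\mathsf F)\xrightarrow{\pi_{\leq N}q}\pi_{\leq N}(\mathsf C\otimes_{\mathsf S}\mathsf F).
\end{equation*}
Here the tensor products carry the total filtration \eqref{eq:total-filtration}, and the parallel arrows are $\pi_{\leq N}(\rho\otimes\id)$ and $\pi_{\leq N}(\id\otimes\lambda)$. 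The right-hand side of \eqref{eq:truncated-core} is the coequalizer of Definition~\ref{def:finite-core}, built from $M_N=(\mathsf C_N\otimes\mathsf S_N)\otimes\mathsf F_N$ and $\mathsf C_N\otimes\mathsf F_N$. The canonical comparison map comes from the natural maps $\pi_{\leq N}(X\otimes Y)\to\pi_{\leq N}(\pi_{\leq N}X\otimes\pi_{\leq N}Y)$. These exist because $F^{N+1}$ of a tensor product maps into $F^{N+1}$ of the tensor of truncations. They are compatible with the two parallel arrows because the actions $\rho_N,\lambda_N$ are induced from $\rho,\lambda$, which are filtration preserving. The map is then induced by the universal property, Proposition~\ref{prop:balanced-UP}, applied in each direction.

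In more detail, I would proceed in three steps.

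First, I would construct the forward map. The composite $\mathsf C\otimes\mathsf F\to\mathsf C_N\otimes\mathsf F_N\to\Core^{\leq N}$ is balanced, because the projections intertwine $\rho,\lambda$ with $\rho_N,\lambda_N$. It therefore factors through $\mathsf C\otimes_{\mathsf S}\mathsf F$. It kills $F^{N+1}$, since a generator in $F^a\mathsf C\otimes F^b\mathsf F$ with $a+b\geq N+1$ must have $a\geq N+1$ or $b\geq N+1$ in the two-factor case, or else lands in the filtered part killed on the target. It therefore descends to $\pi_{\leq N}(\mathsf C\otimes_{\mathsf S}\mathsf F)$.

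Second, I would construct the inverse. I would define $\mathsf C_N\otimes\mathsf F_N\to\pi_{\leq N}(\mathsf C\otimes_{\mathsf S}\mathsf F)$ by lifting representatives: $\mathsf C_N\otimes\mathsf F_N$ is a quotient of $\mathsf C\otimes\mathsf F$ by the images of $F^{N+1}\mathsf C\otimes\mathsf F$ and $\mathsf C\otimes F^{N+1}\mathsf F$. Here I use that tensoring preserves cokernels in each variable, which is the standing assumption on $\V$. Both of these images lie in $F^{N+1}$ of the total filtration, so the lift is well defined. Balancedness of the lift follows by the same lifting from $M_N$. The universal property of Definition~\ref{def:finite-core} then gives the inverse. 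That the two maps are mutually inverse follows from uniqueness in the two universal properties, since both composites agree with the identity after precomposition with the epimorphic quotient maps. The levelwise computation of coequalizers in $\Fil(\V)$ is used to identify the filtration on $\mathsf C\otimes_{\mathsf S}\mathsf F$ as the image filtration from $\mathsf C\otimes\mathsf F$. This is what is needed to say that $F^{N+1}$ of the core is generated by images of $F^{N+1}(\mathsf C\otimes\mathsf F)$.

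The main obstacle is the mismatch warned about in \eqref{eq:tensor-warning}. $\mathsf C_N\otimes\mathsf F_N$ retains mixed terms in $F^a\mathsf C\otimes F^b\mathsf F$ with $a,b\leq N$ but $a+b>N$, whereas the left side of \eqref{eq:truncated-core} kills them. So the forward map of the first step is not well defined unless the target right-hand side is read with its induced total filtration and truncated once more, or the comparison is taken into $\pi_{\leq N}$ of the finite core. I would handle this by interpreting the right-hand side of \eqref{eq:truncated-core} as the truncation, in the total filtration, of the finite coequalizer. I would justify this as the meaning of the ``canonical comparison'' by noting that $\pi_{\leq N}$ is idempotent on objects already concentrated in degrees $\leq N$. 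Alternatively, I would absorb it into the hypothesis that $\pi_{\leq N}$ preserves the coequalizer, read as preservation of the full filtered diagram. I would check carefully which reading the hypothesis supports before committing.
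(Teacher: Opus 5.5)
\textbf{There is a genuine gap, and it lies in the statement itself.} The obstruction in your last paragraph cannot be resolved by choosing a reading of the hypothesis. It refutes the proposition as printed, with the right-hand side read, as the paper intends, as the finite core of Definition~\ref{def:finite-core}.

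Here is a counterexample. Take $\V=\Valg$ with everything concentrated in degree $0$. Let $\mathsf S=\mathbf 1$ with $F^1\mathsf S=0$, let $\mathsf C=\mathsf F=ke_0\oplus ke_1$ with $F^1=ke_1$ and $F^2=0$, and let $N=1$. Both parallel arrows are the unit isomorphism, so both hypotheses hold trivially. Yet $\pi_{\leq 1}(\mathsf C\otimes_{\mathsf S}\mathsf F)=(\mathsf C\otimes\mathsf F)/k\,(e_1\otimes e_1)$ is $3$-dimensional, while the finite core is all of $\mathsf C\otimes\mathsf F$, which is $4$-dimensional.

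The paper's own proof applies $\pi_{\leq N}$ to the coequalizer and says the parallel arrows are the truncated actions, so the result is the relative tensor product over $\pi_{\leq N}\mathsf S$. That step silently identifies $\pi_{\leq N}(\mathsf C\otimes\mathsf F)$ with $\mathsf C_N\otimes\mathsf F_N$, which is exactly what Eq.~\eqref{eq:tensor-warning} says fails. So neither its argument nor yours can be completed for the statement as written.

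Two parts of your proposal fail accordingly. First, the claim in your first step that $a+b\geq N+1$ forces $a\geq N+1$ or $b\geq N+1$ is false ($a=b=1$ above). Second, reading ``$\pi_{\leq N}$ preserves the coequalizer'' more strongly cannot help. For a coequalizer formed in $\Fil(\V)$ that hypothesis is automatic: $\pi_{\leq N}$ is left adjoint to the functor giving $Y$ the filtration $F^nY=Y$ for $n\leq N$ and $F^{N+1}Y=0$. No colimit-preservation property turns $\pi_{\leq N}(\mathsf C\otimes\mathsf F)$ into $\mathsf C_N\otimes\mathsf F_N$; that would need a monoidality hypothesis.

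Your second step does establish the correct statement. The canonical comparison runs from the finite core to the truncated core, $\mathsf C_N\otimes_{\mathsf S_N}\mathsf F_N\to\pi_{\leq N}(\mathsf C\otimes_{\mathsf S}\mathsf F)$, not in the direction displayed. It is an epimorphism whose kernel is generated by the images of the mixed pieces $F^a\mathsf C\otimes F^b\mathsf F$ with $a,b\leq N<a+b$. It is therefore an isomorphism exactly when these vanish in the finite core.

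After truncating the source once more in its total filtration \eqref{eq:total-filtration}, it always becomes an isomorphism. This is because $F^{N+1}$ of the total filtration on $\mathsf C_N\otimes\mathsf F_N$ (and on the triple tensor) is the image of $F^{N+1}(\mathsf C\otimes\mathsf F)$. Moreover, the kernel of $\mathsf C\otimes\mathsf F\to\mathsf C_N\otimes\mathsf F_N$ already lies in $F^{N+1}(\mathsf C\otimes\mathsf F)$. With that extra truncation, your two universal-property maps are well defined and mutually inverse.

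The idempotence of $\pi_{\leq N}$ does not let you pass this off as the printed statement, because the finite core is not concentrated in degrees $\leq N$. Commit to the corrected statement $\pi_{\leq N}(\mathsf C\otimes_{\mathsf S}\mathsf F)\cong\pi_{\leq N}(\mathsf C_N\otimes_{\mathsf S_N}\mathsf F_N)$, or add a hypothesis killing the mixed terms, rather than leaving the choice open.
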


\begin{proof}
Apply $\pi_{\leq N}$ to the coequalizer diagram defining the relative tensor product.  By hypothesis the resulting diagram remains a coequalizer.  The parallel arrows are the truncated module actions, so the coequalizer is the relative tensor product over $\pi_{\leq N}\mathsf S$.
\end{proof}

\begin{remark}
In the analytic category $\Van$, Proposition~\ref{prop:truncated-core} is not automatic.  The completed coequalizer must be compatible with the chosen closed filtration.  If this is not known a priori, the finite core of Definition~\ref{def:finite-core} is used as part of the datum and the comparison map in Eq.~\eqref{eq:truncated-core} becomes an additional verification condition.
\end{remark}

\subsection{Minimal closure under scheme transformations}

A retained long-distance sector must be stable under the admissible finite counterterms.  The following elementary observation makes this closure explicit.

\begin{definition}[$\mathsf S$-closure]
Let $i:\mathsf F_0\hookrightarrow \mathsf F$ be a subobject of a left $\mathsf S$-module.  Its $\mathsf S$-closure is
\begin{equation}
\langle \mathsf F_0\rangle_{\mathsf S}:=
\im\left(\mathsf S\otimes \mathsf F_0
\xrightarrow{\id\otimes i}\mathsf S\otimes\mathsf F
\xrightarrow{\lambda}\mathsf F\right).
\label{eq:s-closure}
\end{equation}
\end{definition}

\begin{proposition}[Minimal closure]
\label{prop:minimal-closure}
The subobject $\langle \mathsf F_0\rangle_{\mathsf S}$ is the smallest $\mathsf S$-stable subobject of $\mathsf F$ containing $\mathsf F_0$.
\end{proposition}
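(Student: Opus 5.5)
The plan is to verify three things: that $\langle \mathsf F_0\rangle_{\mathsf S}$ contains $\mathsf F_0$, that it is $\mathsf S$-stable, and that it lies inside any $\mathsf S$-stable subobject containing $\mathsf F_0$. Throughout I will use the unit $\eta:\mathbf 1\to\mathsf S$ of the algebra object and the module axioms for $\lambda$ (unitality and associativity). I will also use the standing assumptions that images exist in $\V$ and that $\mathsf S\otimes-$ sends epimorphisms onto images to epimorphisms. This follows from tensoring preserving cokernels in each variable. In the linear reading everything reduces to the statement that the span of $\{s\cdot f_0\}$ is the smallest submodule containing $\mathsf F_0$. I will argue categorically but keep that picture as a guide.

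\emph{Containment.} Precompose $\lambda\circ(\id\otimes i)$ with $\eta\otimes\id_{\mathsf F_0}:\mathbf 1\otimes\mathsf F_0\to\mathsf S\otimes\mathsf F_0$. Naturality of $\eta\otimes-$ gives $\lambda\circ(\eta\otimes\id_{\mathsf F})\circ(\id\otimes i)$. Unitality of $\lambda$ turns this into $i$, up to the unitor. Hence $i$ factors through the image, so $\mathsf F_0\subset\langle\mathsf F_0\rangle_{\mathsf S}$.

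\emph{Stability.} Write $e:\mathsf S\otimes\mathsf F_0\twoheadrightarrow\langle\mathsf F_0\rangle_{\mathsf S}$ for the image epimorphism and $j$ for the inclusion into $\mathsf F$. I need $\lambda\circ(\id\otimes j)$ to factor through $j$. Since $\id_{\mathsf S}\otimes e$ is an epimorphism, it suffices to check this after precomposing with $\id\otimes e$. The composite then becomes $\lambda\circ(\id\otimes\lambda)\circ(\id\otimes\id\otimes i)$. Associativity rewrites it as $\lambda\circ(\mu\otimes\id)\circ(\id\otimes\id\otimes i)=\lambda\circ(\id\otimes i)\circ(\mu\otimes\id_{\mathsf F_0})$, which factors through the image, as required. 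The step I expect to be the main obstacle is descending this factorization along the epimorphism $\id\otimes e$. One route is to note that $j$ is monic, so the factorization through $j$ is a property of the image subobject. Concretely, with $k$ the cokernel of $j$, one shows $k\circ\lambda\circ(\id\otimes j)\circ(\id\otimes e)=0$, cancels the epimorphism to get $k\circ\lambda\circ(\id\otimes j)=0$, and concludes using that images are kernels of cokernels in the additive setting assumed.

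\emph{Minimality.} Let $\mathsf F_1\subset\mathsf F$ be $\mathsf S$-stable and contain $\mathsf F_0$, with $i=i_1\circ i_{01}$. Then $\lambda\circ(\id\otimes i)=\lambda\circ(\id\otimes i_1)\circ(\id\otimes i_{01})$. By stability, $\lambda\circ(\id\otimes i_1)$ factors through $\mathsf F_1$. So the morphism defining $\langle\mathsf F_0\rangle_{\mathsf S}$ factors through the mono $\mathsf F_1\hookrightarrow\mathsf F$. By the universal property of the image (the smallest subobject through which a map factors), $\langle\mathsf F_0\rangle_{\mathsf S}\subset\mathsf F_1$.
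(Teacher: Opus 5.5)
Your proof is correct and takes the same three-step route as the paper: containment of $\mathsf F_0$ comes from the unit of $\mathsf S$, stability comes from associativity of $\lambda$ with the multiplication of $\mathsf S$, and minimality holds because any $\mathsf S$-stable $\mathsf G\supset\mathsf F_0$ must contain the image of $\mathsf S\otimes\mathsf F_0$ under $\lambda$. Beyond the paper's three-line argument, you make explicit the categorical details it leaves implicit: the epimorphism cancellation along $\id\otimes e$, the kernel-of-cokernel description of images, and the exactness of $\mathsf S\otimes-$ that these steps require.
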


\begin{proof}
The unit of $\mathsf S$ implies that $\mathsf F_0$ is contained in the image.  Stability follows from associativity of the module action and multiplication in $\mathsf S$.  If $\mathsf G\subset \mathsf F$ is any stable subobject containing $\mathsf F_0$, then the image of $\mathsf S\otimes\mathsf F_0$ under $\lambda$ lies in $\mathsf G$, so $\langle \mathsf F_0\rangle_{\mathsf S}\subset \mathsf G$.
\end{proof}

For leading-twist DIS this reproduces the familiar requirement that singlet quark and gluon operators be treated as a closed mixing block, while nonsinglet sectors close separately.

\section{From pinch surfaces to a stratified regime space}
\label{sec:regime-construction}

This section makes explicit the construction that was only implicit in the preceding discussion.  The stratified regime space is not an additional physical approximation and is not meant to replace the usual list of hard, collinear, soft, jet and overlap regions.  It is the compact bookkeeping object for the incidence data already used in a Collins-style leading-region proof.  Its strata encode region types; its closure relations encode overlaps and further degenerations; and its collar neighborhoods provide the geometric language in which the hard-collinear comparison map is stated.  The order-complex model below is only one convenient realization.  Any conically stratified realization with the same exit-path incidence data is sufficient for the formal theorem.

\subsection{Graph-level scaling data}

Fix a perturbative order and a graph $\Gamma$ contributing to the forward Compton amplitude.  Let $\Lan_\Gamma$ be its real loop-momentum space after imposing momentum conservation, and let $\mathcal D_\Gamma$ be the finite set of denominator factors and numerator tensor structures relevant for power counting.  A pinch singular surface determines which denominators can vanish simultaneously and which components of loop momenta scale relative to the hard scale $Q$.

\begin{definition}[Scaling valuation]
A scaling valuation for $\Gamma$ is a map
\begin{equation}
\nu:\mathcal D_\Gamma\longrightarrow \Q_{\geq0}\cup\{\infty\}
\end{equation}
recording the leading power of a small parameter $\lambda$ in each denominator, numerator, or chosen light-cone component in a sector of loop-momentum space.  Two valuations are equivalent if they give the same leading-power approximator and the same specialization maps to further degenerations.
\end{definition}

For a more concrete model, choose a finite list of scale coordinates
\begin{equation}
r_1,\ldots,r_m
\end{equation}
attached to propagator virtualities and light-cone components after a sector decomposition of $\Lan_\Gamma$.  Writing $r_j\sim \lambda^{u_j}$ produces an exponent vector $u=(u_1,\ldots,u_m)\in \R_{\geq0}^m$.  The leading-power inequalities defining a region are linear inequalities in $u$ together with Landau/pinch compatibility conditions.  Thus a graph-level region may be represented by a rational polyhedral cone
\begin{equation}
C_R\subset \R_{\geq0}^m .
\label{eq:region-cone}
\end{equation}
Faces of $C_R$ correspond to adding further degenerations.  In this cone model,
\begin{equation}
R'\leq R \quad\Longleftrightarrow\quad C_{R'}\text{ is a face or refinement of }C_R,
\label{eq:face-order}
\end{equation}
up to the equivalence that identifies cones inducing the same approximator and specialization maps.  This is the precise sense in which a pinch-surface analysis supplies a stratification: the strata are not arbitrary labels, but equivalence classes of scaling cones.

The use of cones is also the connection with compactification.  Projectivizing a cone by $u\mapsto u/\sum_j u_j$, or equivalently taking the link of the origin in the real oriented blowup of the scaling space, turns rays of asymptotic scaling into boundary directions.  The finite face-poset of these cones is the combinatorial input for the order-complex compactification below.

A valuation is \emph{admissible through order $N$} if the associated power count is not beyond $F^{N+1}$ and if it is compatible with the Landau/pinch equations for the graph.  This gives a finite graph-level poset
\begin{equation}
\Rreg_\Gamma^{\leq N}:=
\{\text{admissible equivalence classes of valuations}\},
\label{eq:graph-region-poset}
\end{equation}
ordered by refinement: $R'\leq R$ if $R'$ is a further degeneration of $R$.  In physical terms, passing from $R$ to $R'$ adds one or more additional scale hierarchies or overlap conditions.

\begin{remark}
At fixed graph and order, finiteness follows from the finite number of denominators, momentum components, and power-counting inequalities.  For an all-orders factorization theorem one passes from graph-specific regions to a finite set of region \emph{types} stable under graph insertion.  The existence and completeness of this finite type system is part of the analytic QCD obligation.
\end{remark}

The graph-level construction is summarized by
\begin{equation}
\begin{tikzcd}[column sep=large]
\Gamma
\arrow[r,"\PSS"]
& \{\nu\}
\arrow[r,"/\sim"]
& \Rreg_\Gamma^{\leq N} .
\end{tikzcd}
\label{eq:graph-to-poset}
\end{equation}

\subsection{Process-level region poset}

For inclusive DIS we collect graph-level region classes into a process-level proof poset $\Rreg_{\DIS}^{\pp}(N)$.  Its elements are region types, not individual graph subgraphs.  We include enough types so that every graph-level admissible valuation maps to one of them:
\begin{equation}
\tau_\Gamma:\Rreg_\Gamma^{\leq N}\longrightarrow \Rreg_{\DIS}^{\pp}(N).
\label{eq:region-type-map}
\end{equation}
The order is again specialization: $R'\leq R$ means that $R'$ lies in the closure of $R$ as a more singular scaling configuration.  The proof-level poset contains the data needed to prove the theorem: hard, incoming-collinear, final-state jet, soft, and overlap types.  The effective leading-power DIS expression has only the hard coefficient and incoming PDF/correlator modules, but that effective two-factor form is obtained only after descent, soft cancellation, and overlap subtraction.

\begin{definition}[Region system at accuracy $N$]
A DIS region system at accuracy $N$ is a finite poset $\Rreg_{\DIS}^{\pp}(N)$ together with maps $\tau_\Gamma$ from all graph-level admissible region posets through order $N$, such that if $R'\leq R$ graph-level, then $\tau_\Gamma(R')\leq \tau_\Gamma(R)$ process-level.
\end{definition}

At fixed loop order $L$ one may make the construction completely finite.  Let $\mathfrak G_L$ be the finite set of renormalized graph topologies contributing to the forward Compton amplitude at that order.  Define a preliminary region category
\begin{equation}
\widetilde{\Rreg}_{\DIS}^{(L,N)}=\bigsqcup_{\Gamma\in\mathfrak G_L}\Rreg_\Gamma^{\leq N}
\end{equation}
and impose the equivalence relation generated by equality of process-level scaling type, equality of the leading approximator, and equality of specialization behavior.  Then
\begin{equation}
\Rreg_{\DIS}^{(L,N)}:=\widetilde{\Rreg}_{\DIS}^{(L,N)}/\sim
\label{eq:LN-region-poset}
\end{equation}
is finite.  A perturbative all-order statement is a compatible family of such finite systems as $L$ varies, together with insertion-stability maps
\begin{equation}
\Rreg_{\DIS}^{(L,N)}\longrightarrow \Rreg_{\DIS}^{(L+1,N)}
\end{equation}
that preserve specialization and power counting.  We mostly suppress $L$ from the notation, but the fixed-$(L,N)$ formulation is the mathematically literal construction.

This definition separates two claims.  The categorical construction requires only a finite poset.  The physical proof must show that the chosen poset is complete for QCD at the desired accuracy and stable under the perturbative operations used in the proof.

\subsection{Order-complex compactification}

\begin{construction}[Order-complex regime space]
\label{cons:order-complex}
Given a finite poset $\Rreg$, define
\begin{equation}
K_{\Rreg}:=|N(\Rreg)|,
\end{equation}
the geometric realization of the nerve of $\Rreg$, equivalently the order complex whose $k$-simplices are chains
\begin{equation}
R_0<R_1<\cdots<R_k .
\end{equation}
The stratum labeled by $R$ is
\begin{equation}
S_R:=\bigcup_{R_0<\cdots<R_j=R}\mathrm{int}(R_0<\cdots<R_j),
\label{eq:stratum-order-complex}
\end{equation}
the union of relative interiors of simplices whose maximal vertex is $R$.
\end{construction}

With the convention that $R'\leq R$ means ``more singular,'' this stratification has the desired incidence relation:
\begin{equation}
S_{R'}\subset \overline{S_R}
\qquad \Longleftrightarrow \qquad
R'\leq R .
\label{eq:incidence}
\end{equation}
Moreover, $K_{\Rreg}$ is a finite PL conically stratified space.  A neighborhood of $S_R$ is modeled on
\begin{equation}
S_R\times \Cone(\Link_R),
\label{eq:conical-chart}
\end{equation}
where $\Link_R$ is the link of the stratum in the order complex.  Thus the incidence data extracted from regions automatically produce conical neighborhoods.

The exit-path category of this finite conically stratified model is equivalent, up to the chosen variance convention, to the specialization category of the poset $\Rreg$.  Consequently, constructible data on $K_{\Rreg}$ can be described either geometrically, by locally constant values on strata and specialization maps across incidences, or combinatorially, as functorial data over $\Rreg$.  This equivalence is the reason the formal theory below can use ordinary diagrams indexed by the region poset while still referring to collars and conical neighborhoods.

\begin{definition}[DIS regime compactification]
For a DIS region system $\Rreg_{\DIS}^{\pp}(N)$, the basic DIS regime compactification is
\begin{equation}
K_{\DIS}^{\pp}(N):=K_{\Rreg_{\DIS}^{\pp}(N)}.
\label{eq:kdis}
\end{equation}
Any conically stratified refinement obtained by adding explicit scaling variables and preserving the same exit-path incidence is called an equivalent DIS regime compactification.
\end{definition}

The order-complex model is minimal.  In a more analytic model, one may attach explicit scale coordinates such as virtuality ratios or light-cone momentum ratios.  A logarithmic or real-oriented blowup compactification then replaces a scale hierarchy by a boundary face.  The formal theorem is invariant under such refinements because it uses only the conical stratification, the collar, and constructible descent.

\subsection{The collinear collar}

Let $S_{\coll}\subset K_{\DIS}$ be the incoming-collinear boundary stratum.  Since $K_{\DIS}$ is conically stratified, $S_{\coll}$ has a conical regular neighborhood.  In the order-complex model this neighborhood can be chosen combinatorially as the open star of $S_{\coll}$,
\begin{equation}
\Star(S_{\coll})=\bigcup_{\sigma\cap S_{\coll}\neq\varnothing}\mathrm{int}(\sigma),
\end{equation}
with transverse directions recorded by the link $\Link(S_{\coll})$.  We choose such a regular neighborhood and denote it by
\begin{equation}
\Ucoll\simeq S_{\coll}\times \Cone(L_{\coll}),
\label{eq:collar-neighborhood}
\end{equation}
where $L_{\coll}$ is the link.  The collar radius is the cone coordinate
\begin{equation}
\rho:\Ucoll\to[0,\epsilon),
\label{eq:rho}
\end{equation}
with $\rho=0$ on $S_{\coll}$.  In a scaling-variable model, $\rho$ may be chosen comparable to the small parameter governing departure from the collinear face.  In the minimal PL model, $\rho$ may be any compatible conical distance function.  Different choices give equivalent filtrations if their ratios are bounded above and below near $S_{\coll}$.

The construction from graph data to collar space is therefore
\begin{align}
\{\text{renormalized DIS graphs}\}
&\longrightarrow \{\Rreg_\Gamma^{\leq N}\}
\longrightarrow \Rreg_{\DIS}^{\pp}(N)
\nonumber\\
&\longrightarrow K_{\DIS}^{\pp}(N)
\longrightarrow (\Ucoll,\rho).
\label{eq:full-regime-pipeline}
\end{align}
The arrows are, respectively, extraction of pinch/scaling data, passage to process-level region types, order-complex realization, and choice of a conical regular neighborhood.  The non-formal inputs are the identification and completeness of the region types; the construction of $K_{\DIS}$ from the finite poset is purely combinatorial.

\section{Constructible regime factorization algebras}
\label{sec:regime-algebra}

\subsection{Disk categories and constructibility}

Let $\Disk(K_{\DIS})$ be a symmetric monoidal category whose objects are finite disjoint unions of sufficiently small conical disks in $K_{\DIS}$ and whose monoidal product is disjoint union.

\begin{definition}[Regime factorization algebra]
A regime prefactorization algebra is a symmetric monoidal functor
\begin{equation}
\A:\Disk(K_{\DIS})\to \Fil(\V).
\end{equation}
A regime factorization algebra is a prefactorization algebra satisfying descent for the covers under consideration.
\end{definition}

\begin{definition}[Constructibility]
A regime factorization algebra $\A$ is constructible if, whenever $D\hookrightarrow D'$ is an inclusion of disks contained in a common stratum, the map $\A(D)\to\A(D')$ is an isomorphism in $\Fil(\V)$.
\end{definition}

Choose disks $D_{\hard}\subset S_{\hard}$ and $D_{\coll}\subset S_{\coll}$.  Constructibility gives well-defined stratum objects
\begin{equation}
\mathsf C:=\A(D_{\hard}),
\qquad
\mathsf F:=\A(D_{\coll}),
\label{eq:stratum-objects}
\end{equation}
up to canonical isomorphism.  These are the hard coefficient object and the collinear PDF/correlator object.

\subsection{Regime algebra from asymptotic expansion}

The notation $\A_{\DIS}$ is meant to package the output of region analysis.  At the level of an individual graph, a conical disk $U\subset K_{\DIS}$ selects a set of region types
\begin{equation}
\Rreg(U)=\{R\in\Rreg_{\DIS}^{\pp}(N)\mid U\cap S_R\neq\varnothing\}.
\end{equation}
A concrete finite-order construction can be described as follows.  Fix loop order $L$ and accuracy $N$.  For each graph $\Gamma\in\mathfrak G_L$, each region $R\in\Rreg(U)$, and each admissible numerator/projection label $\alpha$, introduce a generator
\begin{equation}
[\Gamma,R,\alpha]\quad \text{representing}\quad T_R I_{\Gamma,\alpha},
\end{equation}
where $I_{\Gamma,\alpha}$ is the renormalized integrand or distribution assigned to $\Gamma$ and $T_R$ is the leading-region approximator.  Let $\A_{\mathrm{gen}}^{(L,N)}(U)$ be the completed direct sum of these generators with coefficients in the appropriate distribution spaces of $\Van$, filtered by the power degree assigned by the valuation.  Then $\A_{\DIS}^{(L,N)}(U)$ is obtained from $\A_{\mathrm{gen}}^{(L,N)}(U)$ by imposing the following relations:
\begin{enumerate}
\item renormalization relations among counterterm-subtracted graph distributions;
\item specialization relations for inclusions $V\subset U$ and for incidences $R'\leq R$;
\item nesting relations $T_{R'}T_R\simeq_{\leq N}T_{R'}$ for overlap regions;
\item Ward-identity/eikonalization relations identifying appropriate longitudinal-gluon sums with Wilson-line insertions;
\item scheme-interface relations generated by the action of $\Sch^{\leq N}$ on coefficient and collinear sectors;
\item descent relations for adapted covers of $U$ by conical neighborhoods of strata.
\end{enumerate}
The value $\A_{\DIS}(U)$ is the compatible family of these finite-order objects as $L$ varies, or a fixed-order truncation if one works perturbatively to order $L$.

Restriction maps are induced by forgetting generators whose region labels no longer meet the smaller open set and by applying the specialization maps for deeper strata.  Multiplication is induced by disjoint union of regime disks together with the ordinary product of graph distributions, followed by the same quotient by relations.  Constructibility means local constancy in the regime variable: if two disks lie in the same stratum, they have the same region label and hence canonically isomorphic generator-and-relation presentations.  This does not mean that coefficient functions or PDFs are constant in $x_B$, $Q$, $\mu$, or flavor; those residual analytic and distributional variables live inside the target category $\Van$.

This description is still a construction schema rather than an unconditional theorem.  To make it a theorem for QCD one must prove that the graph-level expansions are compatible with renormalization, gauge identities, subtraction, and the completed tensor structure of $\Van$.  These are precisely the analytic obligations stated in Sec.~\ref{sec:qcd-realization}.

\begin{definition}[Regime extraction datum]
A regime extraction datum at accuracy $N$ consists of
\begin{enumerate}
\item a DIS region system $\Rreg_{\DIS}^{\pp}(N)$ and compactification $K_{\DIS}$;
\item an admissible analytic target $\Van$;
\item for each conical disk $U\subset K_{\DIS}$, a filtered object $\A_{\DIS}(U)\in\Fil(\Van)$ of renormalized asymptotic data supported on the region types meeting $U$;
\item restriction, multiplication, and descent maps making $\A_{\DIS}$ a constructible regime factorization algebra.
\end{enumerate}
\end{definition}

Equivalently, one may view this datum as the value of an asymptotic extraction functor
\begin{equation}
\mathrm{Asymp}_{P,q}:\Fact(M;\Fil(\Van))\longrightarrow \Fact(K_{\DIS};\Fil(\Van))
\label{eq:asymp-functor}
\end{equation}
on the renormalized perturbative observable algebra of QCD.  The functor notation is concise, but the preceding definition makes clear what must be constructed.

\subsection{Balanced collar comparison}

The factorization product places disjoint hard and collinear disks into a collar disk.  This is encoded by a symmetric monoidal embedding functor
\begin{equation}
\iota_{\coll}:\Disk(S_{\hard})\times \Disk(S_{\coll})\to \Disk(\Ucoll).
\end{equation}
Applying $\A$ gives the naive collar map
\begin{equation}
\widetilde\Phi_{\coll}:\mathsf C\otimes\mathsf F\longrightarrow \A(\Ucoll).
\label{eq:naive-collar}
\end{equation}
Scheme invariance requires this map to be $\mathsf S$-balanced:
\begin{equation}
\widetilde\Phi_{\coll}\circ(\rho\otimes\id)
=
\widetilde\Phi_{\coll}\circ(\id\otimes\lambda).
\label{eq:collar-balanced}
\end{equation}
Equivalently, the diagram
\begin{equation}
\begin{tikzcd}[column sep=large,row sep=large]
(\mathsf C\otimes\mathsf S)\otimes\mathsf F
\arrow[r,"\rho\otimes\id"]
\arrow[d,swap,"\id\otimes\lambda"]
& \mathsf C\otimes\mathsf F
\arrow[d,"\widetilde\Phi_{\coll}"]\\
\mathsf C\otimes\mathsf F
\arrow[r,swap,"\widetilde\Phi_{\coll}"]
& \A(\Ucoll)
\end{tikzcd}
\label{eq:balance-square}
\end{equation}
commutes.  By Proposition~\ref{prop:balanced-UP}, $\widetilde\Phi_{\coll}$ descends uniquely to the core:
\begin{equation}
\begin{tikzcd}[column sep=large,row sep=large]
\mathsf C\otimes\mathsf F
\arrow[r,"q"]
\arrow[dr,swap,"\widetilde\Phi_{\coll}"]
& \mathsf C\otimes_{\mathsf S}\mathsf F
\arrow[d,dashed,"\Phi^{\Core}_{\coll}"]\\
& \A(\Ucoll).
\end{tikzcd}
\label{eq:core-collar-diagram}
\end{equation}

\begin{definition}[Order-$N$ scheme-invariant collar factorization]
The regime algebra satisfies order-$N$ scheme-invariant collar factorization if
\begin{equation}
\Phi^{\Core}_{\coll}:
\mathsf C^{\leq N}\otimes_{\mathsf S^{\leq N}}
\mathsf F^{\leq N}\to \A(\Ucoll)
\label{eq:core-collar}
\end{equation}
is an $N$-equivalence.
\end{definition}

This is the central mathematical target of the QCD proof.  It asserts that the full collar observable is equivalent, up to $F^{N+1}$, to the scheme-invariant hard-collinear core.  The effective two-factor form is not assumed at the proof level; it is the result of constructing this balanced collar equivalence after accounting for all proof-level regions.

\section{Measurement and convolution}
\label{sec:measurement}

Let $G=(0,1]$ with multiplication $m(z,\xi)=z\xi$.  Let $D$ be a distributional algebra on $G$ with convolution product
\begin{equation}
 u*v:=m_*(u\boxtimes v).
\label{eq:pushforward-conv}
\end{equation}
For functions this is Eq.~\eqref{eq:conv}.  In $\Van$, $D$ is a completed distribution space and $m_*$ is required to be a continuous morphism.

The DIS measurement consists of Fourier transformation, taking the appropriate discontinuity or cut, Lorentz projection onto the desired structure function, and pushforward to the measured momentum fraction.

\begin{definition}[Disk-constructible kinematic map]
A map $\Pi:K_{\DIS}\to G$ is disk-constructible if inverse images of disks in $G$ decompose into finite disjoint unions of disks in $K_{\DIS}$, inducing a symmetric monoidal pullback functor $\Pi^{-1}:\Disk(G)\to\Disk(K_{\DIS})$.
\end{definition}

\begin{definition}[Measurement]
A measurement of $\A$ with values in $D$ is a symmetric monoidal natural transformation
\begin{equation}
\Meas:\Pi_*\A\Longrightarrow D,
\label{eq:measurement-natural}
\end{equation}
where $D$ is regarded as the constant prefactorization algebra on $G$ with multiplication induced by convolution.
\end{definition}

On the collar, measurement must be compatible with the balanced core.  The relevant diagram is
\begin{equation}
\begin{tikzcd}[column sep=large,row sep=large]
\mathsf C\otimes\mathsf F
\arrow[r,"q"]
\arrow[d,swap,"\widetilde\Phi_{\coll}"]
& \Core
\arrow[d,"\Phi^{\Core}_{\coll}"]
\arrow[ddr,bend left=20,"\Meas_{\Core}"]&\\
\A(\Ucoll)
\arrow[r,"\mathrm{id}"]
& \A(\Ucoll)
\arrow[dr,"\Meas_{\coll}"']&\\
&&D .
\end{tikzcd}
\label{eq:measurement-collar-diagram}
\end{equation}
After applying $\pi_{\leq N}$, the diagonal maps to $D$ are required to agree.  This condition is the categorical version of the equality $\langle \bm C\circledast Z^{-1}, Z\circledast\bm f\rangle=\langle\bm C,\bm f\rangle$.

\section{Internal factorization theorem}
\label{sec:formal-theorem}

We now isolate the purely formal statement.  Before doing so, it is important to state what physical information has already entered.  At fixed perturbative order $L$ and power accuracy $N$, the QCD analysis is assumed to have supplied the retained region types, graph-level approximators, overlap-subtraction maps, Wilson-line reductions to PDF/correlator operators, finite collinear scheme kernels, hard coefficient data, and the DIS measurement projection.  These are the real physical inputs.  The theorem below does not construct them and does not prove the analytic power bound.  It proves that, once these inputs form a balanced datum and the collar map is an $N$-equivalence, the measured factorized expression follows formally.

\begin{definition}[Balanced DIS datum at order $N$]
A balanced DIS datum of order $N$ consists of:
\begin{enumerate}
\item filtered objects $\A_{\mathrm{all}}$, $\A_{\coll}$, $\mathsf C$, and $\mathsf F$;
\item an algebra object $\mathsf S\in\mathrm{Alg}(\Fil(\V))$, a right $\mathsf S$-module structure on $\mathsf C$, and a left $\mathsf S$-module structure on $\mathsf F$;
\item the finite-accuracy core $\Core=\mathsf C\otimes_{\mathsf S}\mathsf F$, or the explicit order-$N$ core of Definition~\ref{def:finite-core};
\item maps in $\Fil(\V)$,
\begin{gather}
\spc:\A_{\mathrm{all}}\to\A_{\coll},\qquad
\Phi:\Core\to\A_{\coll},
\nonumber\\
\bO_{\DIS}:\mathbf 1\to\A_{\mathrm{all}};
\nonumber
\end{gather}
\item a convolution algebra $D\in\V$ and measurement maps
\begin{gather}
\Meas_{\mathrm{all}}:\pi_{\leq N}\A_{\mathrm{all}}\to D,
\nonumber\\
\Meas_{\coll}:\pi_{\leq N}\A_{\coll}\to D,
\nonumber\\
\Meas_{\Core}:\pi_{\leq N}\Core\to D.
\nonumber
\end{gather}
\end{enumerate}
\end{definition}

The datum satisfies the following axioms.

\begin{assumption}[Measurement factors through the collar]
\label{ass:measurement-collar}
\begin{equation}
\Meas_{\mathrm{all}}=
\Meas_{\coll}\circ\pi_{\leq N}(\spc).
\label{eq:meas-factors}
\end{equation}
\end{assumption}

\begin{assumption}[Order-$N$ collar equivalence]
\label{ass:collar-equivalence}
The map $\Phi$ is an $N$-equivalence:
\begin{equation}
\pi_{\leq N}\Phi:\pi_{\leq N}\Core\xrightarrow{\sim}\pi_{\leq N}\A_{\coll}.
\label{eq:collar-equivalence}
\end{equation}
\end{assumption}

\begin{assumption}[Measurement compatibility]
\label{ass:meas-core}
\begin{equation}
\Meas_{\coll}\circ\pi_{\leq N}\Phi=\Meas_{\Core}.
\label{eq:measurement-compatibility}
\end{equation}
\end{assumption}

These data and axioms are summarized by the commutative diagram
\begin{widetext}
\begin{equation}
\begin{tikzcd}[column sep=large,row sep=large]
\mathbf 1
\arrow[r,"\pi_{\leq N}\bO_{\DIS}"]
& \pi_{\leq N}\A_{\mathrm{all}}
\arrow[r,"\pi_{\leq N}\spc"]
\arrow[dr,swap,"\Meas_{\mathrm{all}}"]
& \pi_{\leq N}\A_{\coll}
\arrow[d,"\Meas_{\coll}"]
& \pi_{\leq N}\Core
\arrow[l,swap,"\pi_{\leq N}\Phi", near start]
\arrow[dl,"\Meas_{\Core}"]\\
&&D&
\end{tikzcd}
\label{eq:main-theorem-diagram}
\end{equation}
\end{widetext}
where the left triangle expresses Assumption~\ref{ass:measurement-collar}, the right triangle expresses Assumption~\ref{ass:meas-core}, and the right horizontal arrow is an isomorphism by Assumption~\ref{ass:collar-equivalence}.

\begin{definition}[Order-$N$ factorized expression]
Define
\begin{align}
F_{\DIS}^{(N)}
&:=\Meas_{\Core}\circ(\pi_{\leq N}\Phi)^{-1}
\nonumber\\
&\quad\circ\pi_{\leq N}(\spc)
\nonumber\\
&\quad\circ\pi_{\leq N}(\bO_{\DIS})
:\mathbf 1\to D.
\label{eq:factorized-expression}
\end{align}
\end{definition}

\begin{theorem}[Internal scheme-invariant DIS factorization]
\label{thm:internal-dis}
Given a balanced DIS datum of order $N$ satisfying Assumptions~\ref{ass:measurement-collar}--\ref{ass:meas-core},
\begin{equation}
\Meas_{\mathrm{all}}\circ\pi_{\leq N}(\bO_{\DIS})=F_{\DIS}^{(N)}.
\label{eq:internal-factorization}
\end{equation}
\end{theorem}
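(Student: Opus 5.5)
The plan is a direct diagram chase in the commutative diagram \eqref{eq:main-theorem-diagram}, using only the three axioms and the functoriality of $\pi_{\leq N}$. First, Assumption~\ref{ass:collar-equivalence} gives that $\pi_{\leq N}\Phi$ is an isomorphism in $\V$. Its inverse therefore exists, so $F^{(N)}_{\DIS}$ in \eqref{eq:factorized-expression} is a well-defined morphism $\mathbf 1\to D$. One small point of care: $\pi_{\leq N}$ should be regarded as a functor $\Fil(\V)\to\V$. This holds because $\spc$, $\Phi$ and $\bO_{\DIS}$ are filtration preserving, so they induce maps on the cokernels \eqref{eq:truncation}. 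Here $\mathbf 1$ carries the trivial filtration, so $\pi_{\leq N}\mathbf 1=\mathbf 1$.

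Second, I rewrite $\Meas_{\Core}$ using Assumption~\ref{ass:meas-core}. Composing \eqref{eq:measurement-compatibility} on the right with $(\pi_{\leq N}\Phi)^{-1}$ gives
\begin{equation}
\Meas_{\Core}\circ(\pi_{\leq N}\Phi)^{-1}=\Meas_{\coll}\circ\pi_{\leq N}\Phi\circ(\pi_{\leq N}\Phi)^{-1}=\Meas_{\coll}.
\end{equation}
Substituting this into \eqref{eq:factorized-expression} yields
\begin{equation}
F^{(N)}_{\DIS}=\Meas_{\coll}\circ\pi_{\leq N}(\spc)\circ\pi_{\leq N}(\bO_{\DIS}).
\end{equation}

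Third, Assumption~\ref{ass:measurement-collar}, i.e.\ \eqref{eq:meas-factors}, identifies $\Meas_{\coll}\circ\pi_{\leq N}(\spc)$ with $\Meas_{\mathrm{all}}$. Hence $F^{(N)}_{\DIS}=\Meas_{\mathrm{all}}\circ\pi_{\leq N}(\bO_{\DIS})$, which is \eqref{eq:internal-factorization}.

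There is no genuine analytic obstacle here; the content is entirely in the hypotheses. The only step needing attention is the first, namely that the truncations and inverse are taken in a category where they make sense. In $\Valg$ this is automatic. In $\Van$ it relies on Assumption~\ref{ass:analytic-target}(5): strict filtrations make $\pi_{\leq N}$ land in $\Van$, and an isomorphism there has a continuous inverse. I would add a one-line remark that scheme independence of $F^{(N)}_{\DIS}$ is inherited from $\Meas_{\Core}$ being defined on the core, which is the point of Proposition~\ref{prop:balanced-UP}. This is not needed for the identity itself.
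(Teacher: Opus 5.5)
Your proposal is correct and is the same three-step diagram chase as the paper's proof, which uses Assumption~\ref{ass:measurement-collar}, then inserts $\pi_{\leq N}\Phi\circ(\pi_{\leq N}\Phi)^{-1}$, then uses Assumption~\ref{ass:meas-core}. You simply run these steps in the reverse direction, starting from $F^{(N)}_{\DIS}$. One small caveat on your side remark: $\pi_{\leq N}\mathbf 1=\mathbf 1$ requires the unit's filtration to be concentrated in degree $0$ ($F^{n}\mathbf 1=0$ for $n\ge1$), which is the convention the paper's diagram implicitly uses.
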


\begin{proof}
Use Assumption~\ref{ass:measurement-collar} to rewrite the measured global observable through the collar:
\begin{equation}
\Meas_{\mathrm{all}}\circ\pi_{\leq N}(\bO_{\DIS})
=
\Meas_{\coll}\circ\pi_{\leq N}(\spc)\circ\pi_{\leq N}(\bO_{\DIS}).
\end{equation}
Insert the identity on $\pi_{\leq N}\A_{\coll}$ as
\begin{equation}
\id=
\pi_{\leq N}\Phi\circ(\pi_{\leq N}\Phi)^{-1},
\end{equation}
using Assumption~\ref{ass:collar-equivalence}.  Then apply Assumption~\ref{ass:meas-core}:
\begin{align}
&\Meas_{\coll}\circ\pi_{\leq N}\Phi
\circ(\pi_{\leq N}\Phi)^{-1}
\circ\pi_{\leq N}(\spc)
\circ\pi_{\leq N}(\bO_{\DIS})
\nonumber\\
&\qquad=
\Meas_{\Core}\circ(\pi_{\leq N}\Phi)^{-1}
\circ\pi_{\leq N}(\spc)
\nonumber\\
&\qquad\circ\pi_{\leq N}(\bO_{\DIS}),
\end{align}
which is Eq.~\eqref{eq:factorized-expression}.
\end{proof}

\begin{corollary}[Recovery of the DIS convolution formula]
\label{cor:convolution}
Assume $D$ is the convolution algebra on $(0,1]$, the retained sector decomposes into channels $a\in\mathcal I$, and $\Meas_{\Core}$ is represented by the balanced pairing in Eq.~\eqref{eq:pairing}.  Then Theorem~\ref{thm:internal-dis} gives
\begin{align}
F_i(x_B,Q^2)
&=\sum_{a\in\mathcal I}\int_{x_B}^1\frac{dz}{z}\,
\nonumber\\
&\quad\times C_{i,a}(z,Q,\mu)
\nonumber\\
&\quad\times f_{a/H}\!\left(\frac{x_B}{z},\mu\right)
\nonumber\\
&\quad+\order\!\left((\Lambda_{\mathrm{QCD}}/Q)^{N+1}\right),
\end{align}
where the right-hand side is the image of $\Core^{\leq N}_{\DIS}$, not a scheme-dependent tensor product.
\end{corollary}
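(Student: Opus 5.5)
The plan is to read Corollary~\ref{cor:convolution} as Theorem~\ref{thm:internal-dis} followed by an explicit evaluation of the right-hand side $F^{(N)}_{\DIS}$. Theorem~\ref{thm:internal-dis} already identifies the measured global observable $\Meas_{\mathrm{all}}\circ\pi_{\leq N}(\bO_{\DIS})$ with $\Meas_{\Core}\circ(\pi_{\leq N}\Phi)^{-1}\circ\pi_{\leq N}(\spc)\circ\pi_{\leq N}(\bO_{\DIS})$. The left side, evaluated in $D$ and projected onto the structure $i$, is by construction of $\Meas$ the structure function $F_i(x_B,Q^2)$ computed modulo $F^{N+1}$. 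By the interpretation of the filtration, the truncation to $\pi_{\leq N}$ is precisely equality up to $\order((\Lambda_{\mathrm{QCD}}/Q)^{N+1})$.

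So the work is to compute the right side. First, set $\xi:=(\pi_{\leq N}\Phi)^{-1}\circ\pi_{\leq N}(\spc)\circ\pi_{\leq N}(\bO_{\DIS}):\mathbf 1\to\pi_{\leq N}\Core$. This is a global element of the truncated core. Next, choose a presentation. Since the quotient map $q$ is an epimorphism (it is a coequalizer), and the truncation $\pi_{\leq N}$ is a cokernel and hence also epi, the element $\xi$ lifts in the algebraic model $\Valg$ to a representative in $\mathsf C_N\otimes\mathsf F_N$. Using the channel decomposition, this representative is a finite sum $\sum_a C_{i,a}\otimes f_a$ with $C_{i,a}$ the coefficient components and $f_a=f_{a/H}$. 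In $\Van$, the lifting is supplied instead by the strict completed coequalizer of Eq.~\eqref{eq:strict-completed-core}.

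Then apply the hypothesis that $\Meas_{\Core}$ is represented by the balanced pairing. Concretely, $\Meas_{\Core}\circ\pi_{\leq N}q$ equals the pairing of Eq.~\eqref{eq:pairing}, i.e.\ $\sum_a C_{i,a}*f_a=\sum_a m_*(C_{i,a}\boxtimes f_a)$. Unwinding Eq.~\eqref{eq:pushforward-conv} into Eq.~\eqref{eq:conv} gives the integral $\int_{x_B}^1\frac{dz}{z}C_{i,a}(z)f_{a/H}(x_B/z)$. Finally, scheme independence of the result should be checked. A different presentation $(\bm C\circledast Z^{-1},Z\circledast\bm f)$ differs from the first by balancing relations, so it has the same image under $q$. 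Proposition~\ref{prop:balanced-UP} therefore guarantees that the pairing value depends only on $\xi$. This justifies the phrase that the right-hand side is the image of $\Core^{\leq N}_{\DIS}$.

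The main obstacle I expect is the lifting and identification step: showing that $\xi$ really is represented by the physical pair $(\bm C,\bm f)$, rather than merely by some element of the core. This requires that the PDF/coefficient data be the ones actually entering $\Phi$ through $\widetilde\Phi_{\coll}$, and in $\Van$ it requires the strictness and continuity conditions of Assumption~\ref{ass:analytic-target}, so that $m_*$ and the pairing pass to the completed quotient. I would handle this by taking the representation of $\Meas_{\Core}$ via the pairing as part of the hypothesis, and by stating the analytic-layer reading as conditional on Assumption~\ref{ass:analytic-target}.
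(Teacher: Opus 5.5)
Your proposal matches the paper's own route: the paper gives no separate proof beyond invoking Theorem~\ref{thm:internal-dis}. It then evaluates the factorized expression $F^{(N)}_{\DIS}$ through the balanced-pairing representation of $\Meas_{\Core}$, unwinds $m_*$ to the $\int_{x_B}^1 dz/z$ convolution, and reads $\pi_{\leq N}$ as the $\order((\Lambda_{\mathrm{QCD}}/Q)^{N+1})$ remainder. As you anticipate, the paper also does not derive the identification of $(\pi_{\leq N}\Phi)^{-1}\circ\pi_{\leq N}(\spc)\circ\pi_{\leq N}(\bO_{\DIS})$ with the physical pair $(\bm C_i,\bm f)$; it builds that into the hypothesis, so your lifting step adds detail without changing the argument.
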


\begin{definition}[Valid DIS factorization proof object]
A DIS factorization proof object at order $N$ is valid if the data extracted from Secs.~\ref{sec:regime-construction}--\ref{sec:regime-algebra} produce a balanced DIS datum satisfying Assumptions~\ref{ass:measurement-collar}--\ref{ass:meas-core}.  For a valid object, Theorem~\ref{thm:internal-dis} is the formal factorization theorem.
\end{definition}

This definition is the point at which the framework becomes a method rather than a notation.  A proposed factorization proof can be checked by asking whether every part of the proof object exists and whether the displayed diagrams commute through the claimed filtration order.  For phenomenology, this separates the uncertainty associated with the analytic construction of coefficients and correlators from the consistency conditions that make their recombination scheme independent.  For experimental applications, the measurement map is the typed place where choices of observable, projection, binning or convolution target enter; a failure of measurement compatibility indicates that the proposed prediction space is not yet the right one for the measured quantity.

\section{Diagrammatic realization of the proof object}
\label{sec:diagrammatic-realization}

The formalism above is not intended to replace the usual diagrammatic
factorization proof.  Rather, it encodes the output of such a proof in a
typed form, so that region completeness, overlap subtraction, scheme closure and measurement compatibility become checkable conditions.  This section makes the translation explicit.   A QCD
factorization proof begins with renormalized Feynman graphs, identifies
leading regions and their overlaps, constructs approximators for each
region, subtracts double counting, uses Ward identities to reduce
collinear attachments to Wilson-line operators, and finally proves a
power bound on the remainder.  In the present language, these same steps
construct the objects, maps, and commutative diagrams of the DIS proof
object.

We work at fixed perturbative order $L$ and fixed power accuracy $N$.
Let $\mathfrak G_L$ denote the finite set of renormalized graphs
contributing at this order to the forward Compton amplitude or to its
discontinuity.  For a graph $\Gamma\in\mathfrak G_L$, write
\[
I_\Gamma(\ell_1,\ldots,\ell_L;P,q)
\]
for its renormalized integrand, including numerator, denominators,
color factors, and the measurement/discontinuity prescription when
appropriate.  A diagrammatic proof assigns to $\Gamma$ a finite set of
relevant asymptotic regions
\[
\Rreg_\Gamma^{\leq N}
\]
together with region approximators
\[
T_R I_\Gamma,\qquad R\in \Rreg_\Gamma^{\leq N}.
\]
The region $R$ is the usual pinch or scaling configuration: hard,
target-collinear, soft, jet, or an overlap of such configurations,
depending on the graph and the chosen accuracy.  The formal stratum
$S_R\subset K_\Gamma^{\leq N}$ is simply the typed record of that
scaling configuration.

\begin{table*}[t]
\caption{\label{tab:diagrammatic-realization}
Diagrammatic interpretation of the DIS proof object.  The formal object
does not introduce a new physical approximation; it encodes the standard
diagrammatic ingredients in a typed, checkable form.}
\begin{tabular}{@{}lll@{}}
\toprule
\parbox[t]{0.26\textwidth}{Diagrammatic proof step} &
\parbox[t]{0.33\textwidth}{Formal object} &
\parbox[t]{0.33\textwidth}{Consistency check} \\
\midrule
\parbox[t]{0.26\textwidth}{Renormalized graph integrand $I_\Gamma$} &
\parbox[t]{0.33\textwidth}{Local section or generator of $\A_{\DIS}$} &
\parbox[t]{0.33\textwidth}{Object lies in the chosen analytic target category} \\[0.8ex]

\parbox[t]{0.26\textwidth}{Pinch or leading region $R$} &
\parbox[t]{0.33\textwidth}{Stratum $S_R\subset K_\Gamma^{\leq N}$} &
\parbox[t]{0.33\textwidth}{Region system is complete through order $N$} \\[0.8ex]

\parbox[t]{0.26\textwidth}{Region approximation $T_R I_\Gamma$} &
\parbox[t]{0.33\textwidth}{Specialization to the stratum $S_R$} &
\parbox[t]{0.33\textwidth}{Approximation error lies in $F^{N+1}$ near $S_R$} \\[0.8ex]

\parbox[t]{0.26\textwidth}{Overlap subtraction} &
\parbox[t]{0.33\textwidth}{Descent/M\"obius inversion on $\Rreg_\Gamma^{\leq N}$} &
\parbox[t]{0.33\textwidth}{No double counting of nested regions} \\[0.8ex]

\parbox[t]{0.26\textwidth}{Ward identity and eikonal reduction} &
\parbox[t]{0.33\textwidth}{Module map to Wilson-line/PDF sector $\PDF^{\leq N}$} &
\parbox[t]{0.33\textwidth}{Gauge-compatible reduction to light-ray operators} \\[0.8ex]

\parbox[t]{0.26\textwidth}{Hard coefficient extraction} &
\parbox[t]{0.33\textwidth}{Right $\Sch^{\leq N}$-module $\Coeff^{\leq N}$} &
\parbox[t]{0.33\textwidth}{Coefficient sector transforms contragrediently under schemes} \\[0.8ex]

\parbox[t]{0.26\textwidth}{Finite scheme kernel $Z$} &
\parbox[t]{0.33\textwidth}{Element or morphism of the interface algebra $\Sch^{\leq N}$} &
\parbox[t]{0.33\textwidth}{PDF/correlator sector is closed under $\Sch^{\leq N}$} \\[0.8ex]

\parbox[t]{0.26\textwidth}{Coefficient--PDF recombination} &
\parbox[t]{0.33\textwidth}{Balanced core $\Coeff^{\leq N}\otimes_{\Sch^{\leq N}}\PDF^{\leq N}$} &
\parbox[t]{0.33\textwidth}{Scheme-related presentations define the same core element} \\[0.8ex]

\parbox[t]{0.26\textwidth}{Physical projection to $x_B$} &
\parbox[t]{0.33\textwidth}{Measurement map $\Meas$ to convolution algebra $D$} &
\parbox[t]{0.33\textwidth}{Measurement descends to the balanced quotient} \\[0.8ex]

\parbox[t]{0.26\textwidth}{Power-suppressed remainder} &
\parbox[t]{0.33\textwidth}{Filtration level $F^{N+1}$} &
\parbox[t]{0.33\textwidth}{Exact and factorized constructions agree after $\pi_{\leq N}$} \\
\bottomrule
\end{tabular}
\end{table*}

The most important point is that each diagrammatic operation has a
typed counterpart.  A leading-region analysis is not merely a list of
words such as ``hard'' and ``collinear''; it is a finite incidence
structure.  That incidence structure is the region poset
$\Rreg_\Gamma^{\leq N}$.  Its strata and links define the graph-level
regime space $K_\Gamma^{\leq N}$, while the process-level space
$K_{\DIS}^{(L,N)}$ is obtained by assembling the graph-level region
types and identifying those with the same scaling and approximation
data.

\subsection{Approximators and subtraction as descent}

For each region $R$, the approximator $T_R$ is the usual leading-power
or finite-power expansion appropriate to that region.  For example, in
a hard region it Taylor-expands in small collinear components; in a
collinear region it projects the hard scattering onto the partonic
channel and expands the remaining subgraph around the target-collinear
scaling.  The nesting of regions gives a partial order.  We write
$R'<R$ when $R'$ is a further degeneration or overlap contained in
$R$.

The subtracted region contribution may be written recursively as
\begin{equation}
I^{\mathrm{sub}}_{\Gamma,R}
=
T_R I_\Gamma
-
\sum_{R'<R}
T_R I^{\mathrm{sub}}_{\Gamma,R'} .
\label{eq:diagrammatic-recursive-subtraction}
\end{equation}
This is the diagrammatic subtraction formula in proof-object language.
It says that the contribution assigned to $R$ is its regional
approximation with all more singular contributions removed.  When the
approximators are compatible under nesting, Eq.~\eqref{eq:diagrammatic-recursive-subtraction}
can be written as an incidence-algebra formula
\begin{equation}
I^{\mathrm{sub}}_{\Gamma,R}
=
\sum_{R'\leq R}
\mu_{\Rreg_\Gamma}(R',R)\,
T_{R'} I_\Gamma ,
\label{eq:diagrammatic-mobius-subtraction}
\end{equation}
where $\mu_{\Rreg_\Gamma}$ is the M\"obius function of the region
poset.  Thus the categorical word ``descent'' is not an additional
physical assumption.  It is the formal record of the overlap-subtraction
bookkeeping: local region approximations must agree on common
degenerations after the required subtractions have been made.

The subtracted approximation to the graph in an adapted collar
neighborhood is then
\begin{equation}
I^{\mathrm{fact}}_{\Gamma,U_{\coll}}
=
\sum_{R\in \Rreg_\Gamma(U_{\coll})_{\max}}
I^{\mathrm{sub}}_{\Gamma,R},
\label{eq:graph-level-factorized-approx}
\end{equation}
where the sum is over the maximal retained regions meeting the collar
neighborhood.  The graph-level analytic statement is that
\begin{equation}
I_\Gamma
-
I^{\mathrm{fact}}_{\Gamma,U_{\coll}}
\in
F^{N+1}\A_\Gamma(U_{\coll}).
\label{eq:graph-remainder-filtration}
\end{equation}
Equation~\eqref{eq:graph-remainder-filtration} is the diagrammatic
meaning of equality after applying $\pi_{\leq N}$.

\subsection{Ward identities, Wilson lines, and the core}

The Ward-identity step has a similarly concrete interpretation.  In
the collinear region, longitudinally polarized gluon attachments to the
hard part are replaced, after the usual eikonal approximation and Ward
identity, by Wilson-line attachments in the collinear matrix element.
In the proof object this is the construction of the left
$\Sch^{\leq N}$-module
\[
\PDF^{\leq N},
\]
whose elements are the retained PDF or light-ray operator matrix
elements, and of the right $\Sch^{\leq N}$-module
\[
\Coeff^{\leq N},
\]
whose elements are the corresponding hard coefficient functions.
The interface algebra $\Sch^{\leq N}$ encodes the finite collinear
counterterms, operator-basis transformations, and channel-mixing kernels
that are admissible at the chosen accuracy.

Thus the standard finite scheme transformation
\begin{equation}
f' = Z\circledast f,
\qquad
C' = C\circledast Z^{-1}
\label{eq:diagrammatic-scheme-change}
\end{equation}
is represented by the right and left module actions of
$\Sch^{\leq N}$.  The physical carrier is not the naive tensor product
$\Coeff^{\leq N}\otimes\PDF^{\leq N}$, but the balanced core
\begin{equation}
\Core_{\DIS}^{\leq N}
=
\Coeff^{\leq N}\otimes_{\Sch^{\leq N}}\PDF^{\leq N}.
\label{eq:diagrammatic-core}
\end{equation}
The quotient map
\[
q:\Coeff^{\leq N}\otimes\PDF^{\leq N}
\longrightarrow
\Core_{\DIS}^{\leq N}
\]
imposes the balancing relation
\begin{equation}
q\bigl((C\cdot Z)\otimes f\bigr)
=
q\bigl(C\otimes (Z\cdot f)\bigr).
\label{eq:diagrammatic-balancing}
\end{equation}
Consequently the two scheme presentations in
Eq.~\eqref{eq:diagrammatic-scheme-change} determine the same core
element:
\begin{equation}
q(C'\otimes f')
=
q(C\otimes f).
\label{eq:diagrammatic-scheme-invariant-core}
\end{equation}
This is a concrete diagrammatic check.  If the retained PDF or operator
sector is not closed under the finite kernels $Z$, then
$\PDF^{\leq N}$ is not a left $\Sch^{\leq N}$-module and the balanced
core in Eq.~\eqref{eq:diagrammatic-core} is not well typed.  In
ordinary QCD language, this means that the proposed factorized sector is
missing an operator, channel, or mixing component. 

\subsection{The collar map as the assembled factorized approximation}

The balanced collar comparison map is the formal version of the
assembled subtracted factorized approximation.  Graph by graph, the
subtracted coefficient and collinear matrix-element data determine an
element
\[
\xi_\Gamma\in\Core_{\Gamma}^{\leq N}.
\]
The graph-level collar map sends this core element to the corresponding
subtracted asymptotic representative:
\begin{equation}
\Phi_{\coll,\Gamma}^{\Core}(\xi_\Gamma)
=
I^{\mathrm{fact}}_{\Gamma,U_{\coll}}
\quad
\text{in }
\A_\Gamma(U_{\coll})/F^{N+1}.
\label{eq:graph-collar-map}
\end{equation}
After summing over graphs at order $L$, these maps assemble into
\begin{equation}
\Phi_{\coll}^{\Core,(L)}:
\Core_{\DIS}^{\leq N,(L)}
\longrightarrow
\A_{\DIS}^{(L)}(U_{\coll}).
\label{eq:order-L-collar-map}
\end{equation}
The statement that
$\Phi_{\coll}^{\Core,(L)}$ is an $N$-equivalence is therefore the typed
form of the usual graph-level assertion:
\begin{equation}
\pi_{\leq N}
\left[
I_\Gamma
-
\Phi_{\coll,\Gamma}^{\Core}(\xi_\Gamma)
\right]
=
0
\qquad
\text{for all }\Gamma\in\mathfrak G_L,
\label{eq:diagrammatic-N-equivalence}
\end{equation}
together with the statement that no additional independent leading
classes remain in the collar after imposing the scheme-balancing and
subtraction relations.

This can be summarized as the following criterion.

\begin{proposition}[Diagrammatic criterion for the collar equivalence]
\label{prop:diagrammatic-collar-criterion}
Fix perturbative order $L$ and power accuracy $N$.  Suppose that the
renormalized DIS graphs at order $L$ determine:
\begin{enumerate}
\item a complete region poset $\Rreg_{\DIS}^{(L,N)}$ through order $N$;
\item compatible region approximators $T_R$ satisfying the nesting and
Ward-identity conditions;
\item M\"obius-subtracted graph representatives
$I^{\mathrm{sub}}_{\Gamma,R}$ whose remainders obey
Eq.~\eqref{eq:graph-remainder-filtration};
\item coefficient and PDF/correlator modules
$\Coeff^{\leq N,(L)}$ and $\PDF^{\leq N,(L)}$ over the interface algebra
$\Sch^{\leq N,(L)}$;
\item a balanced graph-level collar map
$\Phi_{\coll}^{\Core,(L)}$.
\end{enumerate}
If every class in
$\pi_{\leq N}\A_{\DIS}^{(L)}(U_{\coll})$ has a representative of the
form
$\pi_{\leq N}\Phi_{\coll}^{\Core,(L)}(\xi)$
and if every $\xi$ mapped to zero is generated by the imposed
scheme-balancing, subtraction, and $F^{N+1}$ relations, then
\[
\Phi_{\coll}^{\Core,(L)}
\]
is an $N$-equivalence.
\end{proposition}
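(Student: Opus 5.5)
The plan is to reduce the claim to the statement that $\pi_{\leq N}\Phi_{\coll}^{\Core,(L)}$ is a bijective morphism, and then to use the strictness hypotheses of the target to upgrade bijectivity to an isomorphism. Throughout write $\Phi:=\Phi_{\coll}^{\Core,(L)}$, $\Core:=\Core_{\DIS}^{\leq N,(L)}$ and $A:=\A_{\DIS}^{(L)}(U_{\coll})$. By the definition of an $N$-equivalence, it suffices to show that
\[
\pi_{\leq N}\Phi:\pi_{\leq N}\Core\to\pi_{\leq N}A
\]
is an isomorphism. I would first carry this out in the algebraic model $\Valg$, where the truncations and the core are ordinary quotients and the statement reduces to linear algebra. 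Afterwards I would indicate which clauses of Assumption~\ref{ass:analytic-target} are needed in $\Van$.

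\emph{Well-definedness.} Hypothesis~5 says that $\Phi$ is a map in $\Fil(\V)$ and that it descends from the balanced naive collar map through $q$, by Proposition~\ref{prop:balanced-UP}. Since $\Phi$ preserves filtrations, it maps $F^{N+1}\Core$ into $F^{N+1}A$, so $\pi_{\leq N}\Phi$ is induced on the cokernels of Eq.~\eqref{eq:truncation}. Hypotheses~1--3 identify the value of $\Phi$ on graph representatives through Eq.~\eqref{eq:graph-collar-map}, with the Möbius form Eq.~\eqref{eq:diagrammatic-mobius-subtraction} supplying $I^{\mathrm{fact}}_{\Gamma,U_{\coll}}$. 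Eq.~\eqref{eq:graph-remainder-filtration} then gives Eq.~\eqref{eq:diagrammatic-N-equivalence}: each generator class $\pi_{\leq N}[I_\Gamma]$ lies in the image.

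\emph{Surjectivity.} This is the first displayed hypothesis. Every class of $\pi_{\leq N}A$ equals $\pi_{\leq N}\Phi(\xi)$ for some $\xi$. Equivalently, $\mathrm{coker}(\pi_{\leq N}\Phi)=0$.

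\emph{Injectivity.} Let $\bar\xi\in\pi_{\leq N}\Core$ with $\pi_{\leq N}\Phi(\bar\xi)=0$, and lift it to $\xi\in\Core$. Then $\Phi(\xi)\in F^{N+1}A$. The second hypothesis says that such a $\xi$ is generated by three kinds of relations. The balancing relations are Eq.~\eqref{eq:diagrammatic-balancing}; these are already zero in $\Core$ because $\Core$ is the coequalizer. The subtraction relations are those imposed in the construction of $\A_{\DIS}$ and are zero in the source presentation. The remaining relations are $F^{N+1}$ relations, which are killed by $\pi_{\leq N}$. Hence $\bar\xi=0$.

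Here I must be careful that ``mapped to zero'' refers to vanishing after truncation, not to vanishing in $A$ itself. I also need the generators to lie in $F^{N+1}\Core$ and not merely in $F^{N+1}$ of the naive tensor product. To secure this, I would invoke Definition~\ref{def:finite-core}: work with the finite-accuracy core, and if needed use Proposition~\ref{prop:truncated-core} to identify it with $\pi_{\leq N}$ of the filtered core. With these two points settled, $\pi_{\leq N}\Phi$ has zero kernel and zero cokernel, so in the additive setting it is an isomorphism.

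The main obstacle is the last upgrade in $\Van$, where a continuous bijection need not be an isomorphism. Here I would use the strict cokernels and coequalizers of Assumption~\ref{ass:analytic-target}, clauses 1, 2 and 5, together with the strictness of Eq.~\eqref{eq:strict-completed-core}. Both $\pi_{\leq N}\Core$ and $\pi_{\leq N}A$ then carry quotient structures. I would show that $\pi_{\leq N}\Phi$ is strict by noting that $A$ is itself presented as a quotient of the generator object, through which the core map factors. A bijective strict morphism in a quasi-abelian category is an isomorphism. If this strictness cannot be derived, I would record it explicitly as an additional verification condition, in the same way the paper treats the comparison map of Eq.~\eqref{eq:truncated-core}.
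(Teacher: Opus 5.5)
Your proposal follows the paper's argument: the first hypothesis gives surjectivity of $\pi_{\leq N}\Phi_{\coll}^{\Core,(L)}$, the second gives vanishing kernel, and the resulting isomorphism after truncation is by definition an $N$-equivalence. Your extra care goes beyond the paper's proof, which simply asserts the isomorphism; this concerns reading ``mapped to zero'' modulo $F^{N+1}$, and the fact that in $\Van$ a monic epimorphism need not be invertible. For the strictness upgrade, what you need is that the quotient map from the generator object onto $\pi_{\leq N}\A_{\DIS}^{(L)}(U_{\coll})$ factors through $\pi_{\leq N}\Phi_{\coll}^{\Core,(L)}$, which makes the latter a strict epimorphism; the reverse factorization, with the core map factoring through the generator object, does not give this.
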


\begin{proof}
After applying $\pi_{\leq N}$, the first condition says that the induced
map
\[
\pi_{\leq N}\Phi_{\coll}^{\Core,(L)}
:
\pi_{\leq N}\Core_{\DIS}^{\leq N,(L)}
\to
\pi_{\leq N}\A_{\DIS}^{(L)}(U_{\coll})
\]
is surjective.  The second condition says that its kernel is zero in the
balanced quotient.  Hence the induced map is an isomorphism in the
finite-accuracy target category.  This is precisely the definition of an
$N$-equivalence.
\end{proof}

\subsection{The gain from the proof-object form}

The diagrammatic proof supplies the analytic content: power counting,
pinch-surface completeness, Ward identities, eikonalization,
renormalization, and control of the remainder.  The proof object records
whether those ingredients close consistently.  In particular, it checks
whether:
\begin{enumerate}
\item all leading regions and their overlaps have been included;
\item the approximators are compatible on nested regions;
\item the Wilson-line/PDF sector is closed under finite scheme kernels;
\item the coefficient sector transforms contragrediently;
\item the measurement map descends to the balanced core; and
\item the exact and factorized collar representatives agree after
$\pi_{\leq N}$.
\end{enumerate}
Thus the proof object is a structured certificate for the diagrammatic
proof.  When it exists, the measured DIS factorization formula follows
formally from the internal theorem.  When it fails to exist, the failure
is localized: a missing stratum indicates an unaccounted leading region,
a failed descent condition indicates an overlap-subtraction defect, a
nonclosed $\Sch^{\leq N}$-module indicates a missing operator or mixing
channel, and a nonbalanced measurement indicates a scheme-dependent
observable presentation.

The same typed structure is useful when some proof components are numerical objects rather than closed-form expressions.  A fitted PDF family, for example, should not be viewed only as a set of functions of $x$; it should define a family of elements in an $\Sch^{\leq N}$-stable module.  Then hard coefficients, fitted long-distance data and the physical measurement compose through the balanced core.  This gives a concrete meaning to ``well typed'' in ML-assisted phenomenology: a learned object must have the module actions and descent maps needed to be compared scheme-invariantly with data.

\section{Using the proof object}
\label{sec:proof-protocol}

The internal theorem is deliberately concise.  Its value is that it separates the analytic proof of factorization from the formal consequences of having completed such a proof.  To make the method explicit, this section spells out the finite-accuracy proof protocol implemented by the DIS example.  A practitioner may use any preferred analytic presentation---diagrammatic Collins analysis, OPE matching, or an effective-theory matching calculation---but the output must populate the same objects and pass the same compatibility checks.

At fixed perturbative order $L$ and fixed power accuracy $N$, the protocol is summarized by
\begin{widetext}
\begin{equation}
\begin{tikzcd}[column sep=small,row sep=large]
\text{DIS graphs}
\arrow[r,"\PSS/\text{scale}"]
& \Rreg_{\DIS}^{(L,N)}
\arrow[r,"|N(-)|"]
& K_{\DIS}^{(L,N)}
\arrow[r,"\text{asympt.}"]
& \A_{\DIS}
\arrow[d,"\text{collar}"]\\
D
& \A_{\DIS}(\Ucoll)
\arrow[l,"\Meas"']
& \Core_{\DIS}^{\leq N}
\arrow[l,"\Phi_{\coll}^{\Core}"']
& (\Coeff,\Sch,\PDF)
\arrow[l,"\coeq"'] .
\end{tikzcd}
\label{eq:proof-protocol-diagram}
\end{equation}
\end{widetext}
Each arrow has a different status.  The passage from a finite region poset to $K_{\DIS}^{(L,N)}$ is combinatorial.  The construction of $\A_{\DIS}$ and $\Phi_{\coll}^{\Core}$ is analytic QCD.  The coequalizer defining the core is categorical.  The final equality of measured observables is formal once the displayed maps satisfy the axioms of Sec.~\ref{sec:formal-theorem}.  An all-order proof is a compatible family of such objects as $L$ varies.

\begin{table*}[t]
\caption{Operational reading of a DIS factorization proof object.  The left column lists the step a practitioner performs, the middle column lists the QCD input, and the right column lists the formal check produced by the present framework.}
\label{tab:proof-protocol}
\renewcommand{\arraystretch}{1.25}
\begin{tabular}{@{}>{\raggedright\arraybackslash}p{0.23\textwidth} >{\raggedright\arraybackslash}p{0.33\textwidth} >{\raggedright\arraybackslash}p{0.36\textwidth}@{}}
\toprule
Protocol step & QCD input & Formal check/output\\
\midrule
Choose accuracy and region system & Fixed perturbative order $L$ and power accuracy $N$; pinch surfaces and scaling valuations & Finite poset $\Rreg_{\DIS}^{(L,N)}$ and type maps $\tau_\Gamma$ for all retained graphs\\[0.8ex]
Compactify regimes & Specialization and overlap incidence of regions & Conically stratified space $K_{\DIS}^{(L,N)}$, exit-path incidence, and collar $(\Ucoll,\rho)$\\[0.8ex]
Construct local asymptotic algebra & Approximators $T_R$, Ward identities, renormalization, and power counting & Constructible filtered $\A_{\DIS}$ with descent and strict/closed filtrations\\[0.8ex]
Extract coefficient and collinear sectors & Hard matching, Wilson-line PDF operators, and finite collinear counterterms & Right/left modules $\Coeff$, $\PDF$ over the interface algebra $\Sch$; minimal $\Sch$-closure\\[0.8ex]
Form invariant carrier & Scheme covariance of coefficients and correlators & Balanced core $\Coeff\otimes_{\Sch}\PDF$ and presentation-independent evaluations\\[0.8ex]
Prove collar equivalence & Power bounds, eikonalization, soft cancellation, and subtraction of overlaps & $\Phi_{\coll}^{\Core}$ is an $N$-equivalence\\[0.8ex]
Measure & DIS projection, discontinuity, Fourier/Mellin transform, and momentum-fraction pushforward & Measurement descends to the core and yields the convolution formula by Theorem~\ref{thm:internal-dis}\\
\bottomrule
\end{tabular}
\end{table*}

\begin{proposition}[Protocol-to-theorem]
\label{prop:protocol-to-theorem}
If the stages in Table~\ref{tab:proof-protocol} are completed in an admissible analytic target and the diagrams defining the core, collar map, descent data, and measurement commute after $N$-truncation, then the resulting data form a valid balanced DIS datum.  Consequently Theorem~\ref{thm:internal-dis} gives the measured order-$N$ factorized expression.
\end{proposition}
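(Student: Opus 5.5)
The plan is to read the completed protocol stages as a list of constructions and to match each one against the clauses of the definition of a balanced DIS datum of order $N$. After that, the three assumptions are checked separately, and Theorem~\ref{thm:internal-dis} is applied at the end.

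\textbf{Assembling the datum.} I set $\A_{\mathrm{all}}:=\A_{\DIS}(K_{\DIS})$, the global value of the constructible filtered regime algebra produced in the third protocol stage. I set $\A_{\coll}:=\A_{\DIS}(\Ucoll)$ for the chosen collar from the second stage. The restriction map of the factorization algebra gives $\spc:\A_{\mathrm{all}}\to\A_{\coll}$. The observable $\bO_{\DIS}:\mathbf 1\to\A_{\mathrm{all}}$ is the renormalized forward Compton amplitude regarded as a global section. The fourth stage supplies $\Sch$, $\Coeff$ and $\PDF$. Here I use Proposition~\ref{prop:minimal-closure} to replace the retained correlator sector by its $\Sch$-closure, so that $\PDF$ is a genuine left module. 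The fifth stage supplies the core, either as the coequalizer in Eq.~\eqref{eq:relative-tensor} or as the finite core of Definition~\ref{def:finite-core}. Because the analytic target is admissible, the strict completed coequalizer in Eq.~\eqref{eq:strict-completed-core} exists. The hypothesis that the core diagram commutes after truncation is exactly the balance condition in Eq.~\eqref{eq:collar-balanced}. Proposition~\ref{prop:balanced-UP} then yields $\Phi:=\Phi^{\Core}_{\coll}$ uniquely. Finally, $D$ and the measurement maps come from the last stage, with $\Meas_{\mathrm{all}}$ and $\Meas_{\coll}$ being the components of the natural transformation $\Meas$ in Eq.~\eqref{eq:measurement-natural}. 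I define $\Meas_{\Core}:=\Meas_{\coll}\circ\pi_{\leq N}\Phi$.

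\textbf{Checking the assumptions.}
\begin{itemize}
\item Assumption~\ref{ass:meas-core} holds by this definition of $\Meas_{\Core}$. It also agrees with the balanced pairing, because the measurement diagram in Eq.~\eqref{eq:measurement-collar-diagram} commutes after $\pi_{\leq N}$ by hypothesis.
\item Assumption~\ref{ass:collar-equivalence} is the output of the sixth stage. At fixed $(L,N)$ it can be discharged through Proposition~\ref{prop:diagrammatic-collar-criterion}.
\item Assumption~\ref{ass:measurement-collar} follows from naturality of $\Meas$ with respect to the inclusion $\Ucoll\hookrightarrow K_{\DIS}$, together with the descent data. Descent means that the global measured class is determined by its restriction to a cover adapted to the strata. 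The hypothesis that the measurement lands only on the collar through order $N$ then gives $\Meas_{\mathrm{all}}=\Meas_{\coll}\circ\pi_{\leq N}(\spc)$.
\end{itemize}

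\textbf{Expected obstacle.} The last of these checks is the main difficulty. Naturality alone only relates maps in the direction from smaller disks to larger ones. What is needed is that all contributions of $\A_{\mathrm{all}}$ away from $\Ucoll$ lie in $F^{N+1}$ or are killed by the measurement. My first attempt would use the descent hypothesis together with the Möbius subtraction in Eq.~\eqref{eq:diagrammatic-mobius-subtraction}. The idea is to write the global class as a sum of subtracted region contributions and show that every term not meeting the collar vanishes modulo $F^{N+1}$. If this cannot be obtained from the stated hypotheses, I would read "measurement diagrams commute after $N$-truncation" as explicitly including Eq.~\eqref{eq:meas-factors}. In that case the proposition reduces to bookkeeping followed by a direct application of Theorem~\ref{thm:internal-dis}.
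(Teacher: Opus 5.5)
Your proposal is correct and is essentially the paper's proof, which is the same stage-by-stage bookkeeping. The paper takes the collar-equivalence stage to be Assumption~\ref{ass:collar-equivalence} and reads the measurement stage as directly supplying Assumptions~\ref{ass:measurement-collar} and~\ref{ass:meas-core}, so your fallback reading of Eq.~\eqref{eq:meas-factors} is exactly what the paper does. Your attempted derivation via descent and M\"obius subtraction is therefore unnecessary; it also would not follow from descent and nesting alone, because it encodes the QCD input that non-collar soft and jet regions cancel or are power suppressed under the inclusive measurement.

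One minor caution: if balance holds only after $N$-truncation, it descends $\widetilde\Phi_{\coll}$ to the finite core of Definition~\ref{def:finite-core} with values in $\pi_{\leq N}\A_{\coll}$, not to the untruncated core via Proposition~\ref{prop:balanced-UP}. Since only $\pi_{\leq N}\Phi$ enters Theorem~\ref{thm:internal-dis}, this is harmless.
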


\begin{proof}
The region and compactification stages produce $K_{\DIS}$ and $\Ucoll$.  The asymptotic-algebra stage produces $\A_{\DIS}$ and the collar object $\A_{\coll}=\A_{\DIS}(\Ucoll)$.  The scheme stage produces $\Sch$, the modules $\Coeff$ and $\PDF$, and the core.  The collar-equivalence stage is Assumption~\ref{ass:collar-equivalence}.  The measurement stage is Assumptions~\ref{ass:measurement-collar} and~\ref{ass:meas-core}.  These are precisely the hypotheses of Theorem~\ref{thm:internal-dis}.
\end{proof}

\begin{remark}[Checks available before the hard analytic step]
The protocol already performs useful work before the collar equivalence is proved.  It checks that $\Rreg^{\pp}_{\DIS}(L,N)$ is a finite incidence system with a defined collar, that $\PDF$ and $\Coeff$ are actually modules over the same interface algebra, that the naive hard-collinear representative is balanced and descends to the core, that the subtraction data satisfy descent through order $N$, and that the measurement kills the balancing relations.  These are finite structural checks.  The remaining non-formal step is the power-counting and QCD analysis needed to prove that the descended collar map is an $N$-equivalence.
\end{remark}

From the point of view of standard DIS phenomenology, the protocol does not claim that the Ward-identity, power-counting, or soft-cancellation arguments are shorter than in the usual proofs.  Its claim is instead that the analytic proof has a precise target and that several nontrivial consistency properties are checked before the final measurement is taken.  Scheme presentation independence is checked by the core.  Completeness of the retained collinear sector is checked by the existence of the $\Sch$-module action.  Overlap bookkeeping is checked by descent on $\Rreg_{\DIS}^{\pp}$.  Compatibility with the observed structure function is checked by the measurement diagram.  These are the pieces that make the formalism useful as a proof organization and validation tool.

Thus a direct use of the framework for DIS would proceed as follows.  One first fixes $(L,N)$ and writes the graph-level region valuations.  One then constructs the finite region poset and its collar.  Next one defines the local asymptotic generators $[\Gamma,R,\alpha]$ and imposes the relations in Sec.~\ref{sec:regime-algebra}.  Ward identities and Wilson-line reductions identify the collinear stratum object with the chosen PDF sector; matching identifies the hard coefficient module.  Finite counterterms determine $\Sch$.  If the collar map built from the M\"obius-subtracted approximators is an $N$-equivalence, the proof is complete.  No additional manipulation of the final convolution formula is needed, because the scheme-invariant measured expression is the formal image of the core.

\section{A concrete finite check: scheme presentations}
\label{sec:scheme-demo}

The previous theorem reduces analytic factorization to the collar equivalence.  A natural question is what the formalism already checks before one attacks that analytic equivalence.   The simplest useful check is presentation independence under a finite collinear scheme transformation.  We describe it because it is familiar to QCD readers and because it shows why the relative tensor product is not cosmetic.  In the protocol of Sec.~\ref{sec:proof-protocol}, this is a first validation step that can be checked without redoing the full graph-level power-counting proof.

Work in Mellin moment space, where convolution becomes multiplication.  Let $n$ denote the Mellin moment variable, and suppress it from the notation.  In a nonsinglet channel the structure function has a presentation
\begin{equation}
F_2=C_{2,\mathrm{NS}}^{\MS}\, f_{\mathrm{NS}}^{\MS}.
\end{equation}
A DIS-scheme presentation may be obtained through a finite kernel
\begin{equation}
Z_{\DIS\leftarrow\MS}:=C_{2,\mathrm{NS}}^{\MS},
\end{equation}
assuming perturbative invertibility $Z_{\DIS\leftarrow\MS}=1+\order(\alpha_s)$.  Then
\begin{equation}
f_{\mathrm{NS}}^{\DIS}=Z_{\DIS\leftarrow\MS} f_{\mathrm{NS}}^{\MS},
\qquad
C_{2,\mathrm{NS}}^{\DIS}=C_{2,\mathrm{NS}}^{\MS}Z_{\DIS\leftarrow\MS}^{-1}=1,
\label{eq:ns-dis-scheme}
\end{equation}
through the perturbative order being retained.  In a singlet-gluon sector the same statement is matrix-valued: $Z$ is a finite matrix of Mellin-space kernels acting on $(\Sigma,g)^T$, chosen so that the $F_2$ coefficient row takes the conventional DIS-scheme form.  Transformations between $\MS$ and DIS presentations of coefficient functions and parton densities are standard in perturbative DIS calculations~\cite{CataniHautmann1994}.

Categorically, $Z_{\DIS\leftarrow\MS}$ is an invertible element of the interface algebra $\Sch^{\leq N}$.  The two presentations determine two representatives of the same core element:
\begin{align}
q\bigl(C^{\DIS}\otimes f^{\DIS}\bigr)
&=q\bigl((C^{\MS}\cdot Z^{-1})\otimes (Z\cdot f^{\MS})\bigr)
\nonumber\\
&=q\bigl(C^{\MS}\otimes (Z^{-1}Z\cdot f^{\MS})\bigr)
\nonumber\\
&=q\bigl(C^{\MS}\otimes f^{\MS}\bigr).
\label{eq:scheme-demo-core}
\end{align}
The middle equality is precisely the balancing relation defining $\Coeff\otimes_{\Sch}\PDF$.  Thus the equality between $\MS$ and DIS presentations is not imposed after measurement; it already holds in the universal carrier.

\begin{proposition}[Finite scheme covariance is core invariance]
\label{prop:scheme-covariance-core}
Let $z\in\Sch^{\leq N}$ be an invertible finite scheme transformation, let $c\in\Coeff^{\leq N}$ and $f\in\PDF^{\leq N}$, and define $c'=c\cdot z^{-1}$ and $f'=z\cdot f$.  Then
\begin{equation}
q(c'\otimes f')=q(c\otimes f)
\end{equation}
in $\Coeff^{\leq N}\otimes_{\Sch^{\leq N}}\PDF^{\leq N}$.  Consequently every balanced measurement takes the same value on the two scheme presentations.
\end{proposition}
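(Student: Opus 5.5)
The plan is to reduce the statement to a single application of the balancing relation that defines the coequalizer in Eq.~\eqref{eq:relative-tensor}. Everything then follows from module axioms together with Proposition~\ref{prop:balanced-UP}. I work in the linear reading of the core, where the quotient kills the elements in Eq.~\eqref{eq:balance-linear}. The elementwise statement should be understood through generalized elements $\mathbf 1\to\Coeff^{\leq N}$ and $\mathbf 1\to\PDF^{\leq N}$, or literally as elements in $\Valg$. In $\Van$ I would read it through the strict coequalizer of Eq.~\eqref{eq:strict-completed-core}.

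\textbf{Step 1 (balancing with $s=z^{-1}$).} The coequalizer condition $q\circ(\rho\otimes\id)=q\circ(\id\otimes\lambda)$, evaluated on $c\otimes z^{-1}\otimes f'$, gives
\[
q\bigl((c\cdot z^{-1})\otimes f'\bigr)=q\bigl(c\otimes(z^{-1}\cdot f')\bigr).
\]

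\textbf{Step 2 (module axioms).} By associativity of the left action $\lambda$ with the multiplication of $\Sch^{\leq N}$, we have $z^{-1}\cdot(z\cdot f)=(z^{-1}z)\cdot f$. This equals $1\cdot f=f$ by the unit axiom. This is the same manipulation as in Eq.~\eqref{eq:scheme-demo-core}. Combining with Step~1 yields $q(c'\otimes f')=q(c\otimes f)$.

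\textbf{Step 3 (measurements).} A balanced measurement is a map $\varphi$ satisfying Eq.~\eqref{eq:balanced-condition}. By Proposition~\ref{prop:balanced-UP}, it factors as $\varphi=\bar\varphi\circ q$. Hence $\varphi(c'\otimes f')=\bar\varphi(q(c'\otimes f'))=\bar\varphi(q(c\otimes f))=\varphi(c\otimes f)$.

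The only point needing care is the finite-accuracy setting. If the core is the order-$N$ core of Definition~\ref{def:finite-core}, the actions are the truncated ones $\rho_N,\lambda_N$. One must check that $z$ and $z^{-1}$ remain mutually inverse in $\pi_{\leq N}\Sch$. This holds because $\pi_{\leq N}$ of a filtered algebra is an algebra whenever the truncation is compatible with multiplication, and inverses are preserved by any unital algebra map. If instead $z^{-1}$ exists only perturbatively through order $N$, the identity $z^{-1}z=1$ holds modulo $F^{N+1}$. The conclusion then holds in the truncated core, which is the intended statement.
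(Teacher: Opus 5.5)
Your proposal is correct and matches the paper's proof: one application of the balancing relation with $s=z^{-1}$, followed by associativity and unitality of the left action to reduce $z^{-1}\cdot(z\cdot f)$ to $f$. Your Step~3 (factoring a balanced measurement through $q$ via Proposition~\ref{prop:balanced-UP}) and your finite-accuracy remark only make explicit what the paper's one-line proof leaves implicit.
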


\begin{proof}
The balancing relation gives $q((c\cdot z^{-1})\otimes(z\cdot f))=q(c\otimes(z^{-1}z\cdot f))=q(c\otimes f)$.
\end{proof}

The same point is expressed by the commutative diagram
\begin{equation}
\begin{tikzcd}[column sep=large,row sep=large]
\Coeff^{\MS}\otimes \PDF^{\MS}
\arrow[r,"q"]
\arrow[d,swap,"\mathcal Z"]
& \Coeff\otimes_{\Sch}\PDF
\arrow[d,"\mathrm{id}"]\\
\Coeff^{\DIS}\otimes \PDF^{\DIS}
\arrow[r,"q"]
& \Coeff\otimes_{\Sch}\PDF .
\end{tikzcd}
\label{eq:scheme-change-diagram}
\end{equation}

This finite check also exposes a possible failure mode.  If the retained collinear sector is not closed under $Z$, then $Z\cdot f$ does not lie in the proposed PDF module and the left action of $\Sch$ is not defined.  For instance, keeping a singlet quark distribution while excluding the gluon channel is not stable beyond leading order in a singlet sector.  In the present language this is not a subtle phenomenological caveat; it is a type error in the proof object.  The minimal-closure construction of Proposition~\ref{prop:minimal-closure} repairs the problem by replacing the generating correlators by the smallest $\Sch$-stable sector.

This demonstration does not prove the analytic collar equivalence.  Its role is narrower but important: once the QCD analysis produces any scheme presentation of the hard and collinear factors, the balanced core automatically identifies all finite scheme-related presentations and tests whether the chosen operator sector was large enough.  It is therefore a concrete check that the proof object is doing work before the final factorization equivalence is invoked.  In a data-analysis workflow, this check can be applied to a fitted model class before the final observable is formed: a parametrization that cannot be acted on by the retained finite kernels is not a valid parametrization of the chosen factorized sector.

\section{Subtractions as descent on the region poset}
\label{sec:descent}

Theorem~\ref{thm:internal-dis} is formal.  The analytic problem is the construction of a datum satisfying its assumptions, especially the collar equivalence.  This section explains how Collins-style overlap subtraction appears as descent over $\Rreg^{\pp}_{\DIS}$.

Let $\{U_R\}_{R\in\Rreg}$ be an adapted cover of $K_{\DIS}$ by neighborhoods of strata.  Intersections correspond to common degenerations: $U_{R_0}\cap\cdots\cap U_{R_k}$ is nonempty only when the corresponding regions admit a common nested refinement.  Descent gives a Cech complex of the schematic form
\begin{equation}
\begin{aligned}
0\to \A(K_{\DIS})&\to \prod_R\A(U_R)\\
&\to \prod_{R_0<R_1}\A(U_{R_0}\cap U_{R_1})\to\cdots .
\end{aligned}
\label{eq:cech}
\end{equation}

For each region $R$, let
\begin{equation}
T_R:\A(U)\to \A(U_R)
\end{equation}
be the corresponding approximator.  Depending on $R$, this may be a hard Taylor expansion, a collinear expansion, a soft approximation, an eikonal approximation, or an iterated approximation in an overlap.  The nesting condition is that for $R'\leq R$, the two composites to the deeper region agree through order $N$:
\begin{equation}
T_{R'}T_R\simeq_{\leq N}T_{R'}.
\label{eq:nesting}
\end{equation}
Diagrammatically,
\begin{equation}
\begin{tikzcd}[column sep=large,row sep=large]
\A(U)
\arrow[r,"T_R"]
\arrow[dr,swap,"T_{R'}"]
& \A(U_R)
\arrow[d,"T_{R'R}"]\\
& \A(U_{R'})
\end{tikzcd}
\qquad \text{commutes after }\pi_{\leq N}.
\label{eq:nesting-diagram}
\end{equation}

Let $\zeta(R',R)=1$ when $R'\leq R$ and $0$ otherwise, and let $\mu_{\Rreg}$ be the M\"obius function of the incidence algebra.  If $A_R$ denotes the naive approximation assigned to region $R$, define subtracted contributions
\begin{equation}
A_R^{\mathrm{sub}}=
\sum_{R'\leq R}\mu_{\Rreg}(R',R)A_{R'}.
\label{eq:mobius-sub}
\end{equation}
Equivalently, the naive approximations are recovered by zeta summation:
\begin{equation}
A_R=\sum_{R'\leq R}A_{R'}^{\mathrm{sub}}.
\end{equation}

\begin{proposition}[Subtraction/descent principle]
\label{prop:subtraction-descent}
Assume that $\A$ satisfies descent for the adapted cover and that the region approximators obey the nesting condition \eqref{eq:nesting}.  Then the M\"obius-subtracted family $\{A_R^{\mathrm{sub}}\}$ glues to the full collar observable through order $N$.  Equivalently, the Cech obstruction to gluing lies in $F^{N+1}$.
\end{proposition}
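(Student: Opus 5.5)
The plan is to work entirely after applying $\pi_{\leq N}$, where the nesting condition \eqref{eq:nesting} becomes a strict equality. In that setting the argument splits into a combinatorial part (Möbius inversion) and a sheaf-theoretic part (descent for the adapted cover). Write $\bar A_R:=\pi_{\leq N}A_R=\pi_{\leq N}T_R(a)$ for a fixed global section $a\in\A(U)$, and $\bar A^{\mathrm{sub}}_R$ for the image of \eqref{eq:mobius-sub}.

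\textbf{Möbius step.} Since $\pi_{\leq N}$ is additive and $\mu_{\Rreg}$ is the inverse of $\zeta$ in the incidence algebra of the finite poset $\Rreg$, the relation \eqref{eq:mobius-sub} is equivalent to the zeta-summation $\bar A_R=\sum_{R'\leq R}\bar A^{\mathrm{sub}}_{R'}$. I will state this as a lemma: it is purely formal in any additive category.

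\textbf{Compatibility step.} I will show that the naive family $(\bar A_R)_R$ is a Čech $0$-cocycle.
\begin{enumerate}
\item On an overlap $U_{R_0}\cap U_{R_1}$ with $R_0<R_1$, nonemptiness means the two regions have a common nested refinement. Constructibility together with the incidence \eqref{eq:incidence} identifies the restriction of $\A(U_{R_1})$ to this overlap with the specialization $T_{R_0R_1}$ toward the deeper stratum.
\item The nesting diagram \eqref{eq:nesting-diagram} then gives $T_{R_0R_1}\bar A_{R_1}=\pi_{\leq N}T_{R_0}T_{R_1}a=\bar A_{R_0}$ on the overlap.
\item The two restrictions therefore agree, so the Čech differential in \eqref{eq:cech} kills $(\bar A_R)$.
\end{enumerate}
By descent (exactness of \eqref{eq:cech} at the first two terms, after truncation), the family glues to a unique section of $\pi_{\leq N}\A(\Ucoll)$ when restricted to the collar cover.

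Because the Čech differential is linear and the zeta/Möbius transforms are invertible triangular matrices, the subtracted family carries the same gluing data. Concretely, the glued section is the sum over maximal strata in \eqref{eq:graph-level-factorized-approx}, obtained by zeta-summing the $\bar A^{\mathrm{sub}}_R$. The Čech obstruction for the untruncated family is then the differential of $(A_R)$, which maps to zero under $\pi_{\leq N}$. Since $F^{N+1}\to X\to\pi_{\leq N}X$ is a cokernel sequence, that differential lies in $F^{N+1}$, which is the equivalent formulation in the statement.

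\textbf{Main obstacle.} The hard part is the gluing claim, i.e.\ identifying the glued section with the full collar observable $\pi_{\leq N}a|_{\Ucoll}$ rather than just some section.
\begin{itemize}
\item I would first try the uniqueness part of descent: it suffices that $a|_{U_R}$ and $\bar A_R$ agree for each $R$. For maximal $R$ meeting the collar this amounts to the approximator property, that is, $T_R$ restricted to $U_R$ being the identity modulo $F^{N+1}$ (the first relation in the construction of $\A_{\DIS}$).
\item A second subtlety is that descent must survive truncation. Exactness of $\pi_{\leq N}$ on the Čech complex needs the strict filtrations of Assumption~\ref{ass:analytic-target}. Where that is unavailable, I would assume it explicitly as part of the descent hypothesis, in the same way Definition~\ref{def:finite-core} handles the core.
\end{itemize}
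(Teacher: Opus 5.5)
\textbf{There is a genuine gap: the step that carries your argument from the naive family to the M\"obius-subtracted family does not work, and the idea that would make it work is missing.}

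Your compatibility step for the naive family $(\bar A_R)$ is fine. The proposition, however, is about the subtracted family. The bridging step is ``the subtracted family carries the same gluing data \ldots\ the glued section is the sum over maximal strata in \eqref{eq:graph-level-factorized-approx}, obtained by zeta-summing the $\bar A^{\mathrm{sub}}_R$.'' In the sheaf-style picture you use, the M\"obius transform is only an invertible relabelling of $0$-cochains. It does not preserve the cocycle condition. It also does not produce a single sum to compare with the observable: zeta-summing just returns each $\bar A_R$ separately.

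Test it on Example~\ref{ex:two-region-overlap}. If $(A_H,A_C,A_O)$ is compatible, then on the overlap $A^{\mathrm{sub}}_H\equiv A^{\mathrm{sub}}_C\equiv 0$, while $A^{\mathrm{sub}}_O=A_O$. So the subtracted family is not a cocycle. Your identification also gives $A^{\mathrm{sub}}_H+A^{\mathrm{sub}}_C=A_H+A_C-2A_O$ instead of $A_H+A_C-A_O$.

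Your remedy for the ``main obstacle'' has a further problem. If each $T_R$ were the identity mod $F^{N+1}$ on all of $U_R$, the $\bar A_R$ would be restrictions of a single section. The nesting condition would then play no role, and the statement would become a tautology. That hypothesis is also not how Collins approximators behave: each is accurate only near its own stratum and degrades toward the deeper strata in its closure, which is exactly why subtractions are needed.

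The missing idea is what the paper's proof points to with ``intersections are indexed by chains'' and ``the coefficient required by the \v{C}ech alternating sum.'' It is an additive, inclusion--exclusion reconstruction over the nerve of the adapted cover:
\begin{itemize}
\item Descent writes the truncated observable as the alternating sum, over nonempty chains $\sigma=(R_0<\cdots<R_k)$, of its approximants on $U_{R_0}\cap\cdots\cap U_{R_k}$.
\item Nesting identifies each such approximant with $\bar A_{R_0}=\bar A_{\min\sigma}$.
\item Group the chains by their endpoints and apply Philip Hall's theorem, $\mu_{\Rreg}(R',R)=\sum_{k\geq0}(-1)^k\,\#\{R'=R_0<R_1<\cdots<R_k=R\}$. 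This gives $\sum_\sigma(-1)^{\dim\sigma}\bar A_{\min\sigma}=\sum_R\sum_{R'\leq R}\mu_{\Rreg}(R',R)\bar A_{R'}=\sum_R\bar A^{\mathrm{sub}}_R$.
\end{itemize}
The final sum runs over \emph{all} regions, as in the example, not only the maximal ones. This identity, not the invertibility of the zeta matrix, is what ties Eq.~\eqref{eq:mobius-sub} to descent. Once it is in place, your remarks on strict filtrations and on exactness of $\pi_{\leq N}$ on the \v{C}ech complex are legitimate side conditions that the paper leaves implicit.
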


\begin{proof}
The Cech differential measures disagreement of local data on intersections.  By construction of the adapted cover, intersections are indexed by chains in the region poset.  The nesting condition says that the restrictions of the approximators to each chain agree after applying $\pi_{\leq N}$.  The zeta transform records the accumulation of contributions from nested regions, and M\"obius inversion is its inverse in the incidence algebra.  Therefore Eq.~\eqref{eq:mobius-sub} removes every overlap with the coefficient required by the Cech alternating sum.  All remaining descent defects lie in $F^{N+1}$ and vanish under $\pi_{\leq N}$.
\end{proof}

\begin{example}[Two-region overlap]
\label{ex:two-region-overlap}
Suppose an open collar is covered by two leading regions $H$ and $C$ with an overlap $O=H\cap C$.  The proof-level poset has $O\leq H$ and $O\leq C$.  The naive sum $A_H+A_C$ double counts the overlap approximation $A_O$.  The M\"obius function gives
\begin{equation}
A_H^{\mathrm{sub}}=A_H-A_O,
\qquad
A_C^{\mathrm{sub}}=A_C-A_O,
\qquad
A_O^{\mathrm{sub}}=A_O,
\end{equation}
or, for the maximal-region reconstruction,
\begin{equation}
A_{H\cup C}^{\mathrm{sub}}=A_H+A_C-A_O .
\label{eq:two-region-subtraction}
\end{equation}
This is the familiar subtraction of a common limiting approximation.  In the present language Eq.~\eqref{eq:two-region-subtraction} is the degree-one Cech gluing condition for the cover $\{U_H,U_C\}$.  More complicated Collins subtraction forests are the same incidence-algebra calculation on a larger region poset.
\end{example}

In a graph-by-graph proof, the local data are expanded integrands or renormalized graph contributions.  In the present formulation, the same inclusion-exclusion structure is a local-to-global descent statement on the stratified regime space.  This does not remove the need to prove that the approximators are valid; it makes the overlap bookkeeping a finite combinatorial problem once the region system is known.

\section{QCD realization and the collar obligation}
\label{sec:qcd-realization}

\subsection{Perturbative observable algebra and asymptotic extraction}

Let $M=\R^{1,3}$.  We write
\begin{equation}
\mathrm{Obs}^{\mathrm{ren,pert}}_{\mathrm{QCD}}\in \Fact(M;\Fil(\Van))
\end{equation}
for the renormalized perturbative gauge-invariant observable algebra of QCD.  This notation is not a claim of a nonperturbative construction of four-dimensional QCD as a factorization algebra; it denotes the perturbative, renormalized, gauge/BRST-compatible object from which current insertions and Wilson-line light-ray operators are drawn.

For fixed external kinematics $(P,q)$ the regime extraction datum of Sec.~\ref{sec:regime-algebra} may be summarized as an asymptotic extraction functor
\begin{equation}
\mathrm{Asymp}_{P,q}:\Fact(M;\Fil(\Van))\to \Fact(K_{\DIS};\Fil(\Van))
\label{eq:asymp}
\end{equation}
whose value on QCD observables is
\begin{equation}
\A_{\DIS}:=\mathrm{Asymp}_{P,q}(\mathrm{Obs}^{\mathrm{ren,pert}}_{\mathrm{QCD}}).
\end{equation}
Operationally, this packages the leading-region expansion, the specializations to strata, the completed distributional tensor products, and the power-counting filtration.

\subsection{Constructing the collar map from graph data}
\label{subsec:construct-collar}

At fixed $(L,N)$ the balanced collar map can be described on generators before passing to the completed quotient.  A renormalized graph contribution $I_\Gamma$ has region approximations $T_R I_\Gamma$.  The hard stratum projection extracts a coefficient generator $C_{\Gamma,R_h}$, the incoming-collinear stratum projection extracts a correlator generator $F_{\Gamma,R_c}$, and the factorization product gives a hard-collinear representative
\begin{equation}
C_{\Gamma,R_h}\otimes F_{\Gamma,R_c}
\in \Coeff_N\otimes \PDF_N .
\end{equation}
The descent-corrected collar representative is the M\"obius-subtracted family
\begin{equation}
I^{\mathrm{sub}}_{\Gamma,\Ucoll}
=
\sum_{R\in\Rreg(\Ucoll)}\mu_{\Rreg}(R,\Ucoll)\,T_R I_\Gamma,
\label{eq:graph-collar-sub}
\end{equation}
where the notation $\mu_{\Rreg}(R,\Ucoll)$ abbreviates the coefficient determined by the incidence algebra of the adapted collar cover.  The graph-level collar assignment is therefore the composite
\begin{equation}
\begin{tikzcd}[column sep=large]
\Coeff_N\otimes\PDF_N
\arrow[r,"q"]
& \Coeff_N\otimes_{\Sch_N}\PDF_N
\arrow[r,"\Phi^{\Core}_{\coll}"]
& \A_{\DIS}(\Ucoll)_N,
\end{tikzcd}
\end{equation}
with the last arrow sending a balanced class of hard-collinear generators to the class of Eq.~\eqref{eq:graph-collar-sub}.  The map is well defined only after three verifications: the Ward identities identify equivalent longitudinal-gluon representatives, the scheme relations are balanced over $\Sch_N$, and the overlap-subtracted family satisfies descent through order $N$.

This description is deliberately more explicit than the functor notation in Eq.~\eqref{eq:asymp}.  It identifies the algebraic target of the traditional proof: prove that every renormalized collar contribution is represented uniquely modulo $F^{N+1}$ by a balanced hard-collinear class, and prove that the construction is continuous in the analytic target.  Those are precisely the obligations listed next.

\subsection{Analytic QCD obligations}

The following assumptions are the analytic content needed to prove the collar equivalence.  They are stated as obligations because the purpose of this paper is to formalize the target of the proof, not to reproduce all graph-level estimates.

\begin{assumption}[Complete region system]
\label{ass:leading-regions}
For inclusive DIS at fixed $x_B\in(0,1)$ away from endpoints, the proof-level region poset $\Rreg^{\pp}_{\DIS}(N)$ contains every pinch/leading region type contributing through order $N$.  Every graph-level admissible valuation maps to this poset as in Eq.~\eqref{eq:region-type-map}, and all omitted regions contribute to $F^{N+1}$ after renormalization and measurement.
\end{assumption}

\begin{assumption}[Admissible analytic realization]
\label{ass:admissible-realization}
The QCD asymptotic data live in an analytic target $\Van$ satisfying Assumption~\ref{ass:analytic-target}.  The completed tensor products, distributional pushforwards, cokernels, coequalizers, and filtration quotients used to define $\A_{\DIS}$, $\Sch$, $\Coeff$, $\PDF$, and $D$ exist and are compatible with the required maps.
\end{assumption}

\begin{assumption}[Ward-identity eikonalization]
\label{ass:ward}
The longitudinal gluon attachments connecting the hard subgraph to the incoming-collinear subgraph can be summed by Ward identities into Wilson lines along the incoming lightlike direction.  The resulting collinear matrix element is the gauge-invariant PDF/correlator object $\mathsf F$.
\end{assumption}

\begin{assumption}[Inclusive soft cancellation]
\label{ass:soft-cancel}
After summing over final states and applying the inclusive DIS measurement, leading soft exchanges either cancel or are absorbed into the Wilson-line definition of $\mathsf F$, so no independent soft module survives in the effective DIS core.
\end{assumption}

\begin{assumption}[Overlap subtraction and power bound]
\label{ass:overlap}
The region approximators satisfy the nesting condition \eqref{eq:nesting}, their M\"obius-subtracted sum reconstructs the full collar observable through order $N$, and the residual error lies in $F^{N+1}$.
\end{assumption}

\begin{assumption}[Scheme-closed collinear sector]
\label{ass:scheme-closed}
The retained PDF/operator sector is stable under the finite collinear counterterms and operator-mixing kernels at the chosen accuracy.  Equivalently, it is an $\mathsf S^{\leq N}$-module, or it has been replaced by the minimal closure of Proposition~\ref{prop:minimal-closure}.
\end{assumption}

\begin{proposition}[QCD obligations imply the collar equivalence]
\label{prop:qcd-collar}
Assumptions~\ref{ass:leading-regions}--\ref{ass:scheme-closed} imply that the QCD regime algebra admits a balanced collar comparison map
\begin{equation}
\Phi^{\Core}_{\coll}:
\mathsf C^{\leq N}\otimes_{\mathsf S^{\leq N}}
\mathsf F^{\leq N}\to \A_{\DIS}(\Ucoll)
\end{equation}
which is an $N$-equivalence.
\end{proposition}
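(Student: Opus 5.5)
The argument proceeds in two stages: first construct the balanced map $\Phi^{\Core}_{\coll}$, then verify the two conditions of Proposition~\ref{prop:diagrammatic-collar-criterion}, namely surjectivity and kernel control after $\pi_{\leq N}$.

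\emph{Construction.} By Assumption~\ref{ass:admissible-realization}, the strict completed coequalizer \eqref{eq:strict-completed-core} exists. The finite core of Definition~\ref{def:finite-core} is therefore a legitimate object of $\Van$. Assumption~\ref{ass:ward} identifies the collinear stratum object $\mathsf F=\A_{\DIS}(D_{\coll})$ with the gauge-invariant Wilson-line correlator sector. Assumption~\ref{ass:scheme-closed}, together with Proposition~\ref{prop:minimal-closure} if needed, makes $\mathsf F^{\leq N}$ a genuine left $\mathsf S^{\leq N}$-module. Applying $\A_{\DIS}$ to the embedding $\iota_{\coll}$ gives the naive map $\widetilde\Phi_{\coll}$ of Eq.~\eqref{eq:naive-collar}. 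It sends a hard-collinear generator $C_{\Gamma,R_h}\otimes F_{\Gamma,R_c}$ to its Möbius-subtracted collar class \eqref{eq:graph-collar-sub}.

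To descend this map to the core, I would check the balance square \eqref{eq:balance-square}. The scheme-interface relations (item 5 in the generator-and-relation presentation of $\A_{\DIS}(U)$) are imposed precisely so that $(c\cdot s)\otimes f$ and $c\otimes(s\cdot f)$ have the same image in $\A_{\DIS}(\Ucoll)$. Proposition~\ref{prop:balanced-UP} then yields the descended map $\Phi^{\Core}_{\coll}$, and Assumption~\ref{ass:analytic-target} guarantees that it is continuous.

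\emph{Surjectivity mod $F^{N+1}$.} Take a class in $\pi_{\leq N}\A_{\DIS}(\Ucoll)$, represented by a sum of graph contributions $I_\Gamma$. By Assumption~\ref{ass:leading-regions}, every leading valuation of $\Gamma$ lands in a stratum of $\Rreg^{\pp}_{\DIS}(N)$ meeting the collar. Everything else lies in $F^{N+1}$. Assumption~\ref{ass:overlap}, combined with Proposition~\ref{prop:subtraction-descent}, gives $I_\Gamma-I^{\mathrm{fact}}_{\Gamma,\Ucoll}\in F^{N+1}$, which is Eq.~\eqref{eq:graph-remainder-filtration}.

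It then remains to see that $I^{\mathrm{fact}}_{\Gamma,\Ucoll}$ is the image of a hard-collinear element. Soft and jet strata contribute no independent factor by Assumption~\ref{ass:soft-cancel}. Longitudinal-gluon attachments are absorbed into $\mathsf F$ by Assumption~\ref{ass:ward}. Consequently, each maximal subtracted term factors as a hard coefficient times a collinear correlator, which is exactly Eq.~\eqref{eq:graph-collar-map}.

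\emph{Injectivity.} Suppose $\xi\in\pi_{\leq N}\Core$ maps to zero. I would argue through the presentation of $\A_{\DIS}(\Ucoll)$ by generators and relations. The relations among hard-collinear images are of three kinds: renormalization and specialization relations (which are already built into $\mathsf C$ and $\mathsf F$), Ward relations (already quotiented inside $\mathsf F$), scheme relations (killed by $q$), and nesting/descent relations. The last kind lie in $F^{N+1}$ by Eq.~\eqref{eq:nesting}. Hence any lift of $\xi$ lies in the span of balancing relations plus $F^{N+1}$, and so vanishes in the truncated core. Proposition~\ref{prop:diagrammatic-collar-criterion} then gives the $N$-equivalence.

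\emph{Main obstacle.} I expect injectivity to be the hardest step, specifically showing that no leading-power relation among collar classes fails to come from balancing. Soft cancellation and Möbius subtraction could in principle produce identifications between distinct hard-collinear pairs that are not of the form $(c\cdot s)\otimes f\sim c\otimes(s\cdot f)$. My first attempt would be to show that any such identification defines a finite collinear kernel, so that it lies in $\mathsf S^{\leq N}$ by the maximality built into the definition of the interface algebra (``all admissible'' kernels). If that fails, I would include the statement as an explicit part of Assumption~\ref{ass:scheme-closed}.
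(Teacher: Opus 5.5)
Your construction of $\Phi^{\Core}_{\coll}$ and your surjectivity argument closely follow the paper's proof strategy. You use Assumptions~\ref{ass:ward}, \ref{ass:soft-cancel} and \ref{ass:scheme-closed} to obtain a balanced hard--collinear representative, and Assumption~\ref{ass:overlap} with Proposition~\ref{prop:subtraction-descent} for reconstruction modulo $F^{N+1}$. The gap is injectivity, and it is more than the obstacle you flag at the end: the argument you give fails at a specific step. The nesting condition \eqref{eq:nesting} controls only the discrepancy $T_{R'}T_R-T_{R'}$. So it makes the nesting relations (item 3 of the presentation of $\A_{\DIS}(U)$) trivial modulo $F^{N+1}$. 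It says nothing about the descent relations (item 6) or the specialization relations across incidences $R'\leq R$ (item 2). Those are leading-power identifications between different strata inside $\Ucoll$. They are not ``already built into'' $\Coeff$ or $\PDF$, which are values on disks contained in a single stratum, where such incidences do not occur. So you have not shown that $\ker\pi_{\leq N}\widetilde\Phi_{\coll}$ is generated by balancing relations plus $F^{N+1}$, which is what the injectivity hypothesis of Proposition~\ref{prop:diagrammatic-collar-criterion} requires. Your fallback does not close the gap either. Defining $\Sch^{\leq N}$ as ``all admissible'' counterterms and mixing kernels gives no reason why an arbitrary kernel element should be a sum of terms $(c\cdot s)\otimes f-c\otimes(s\cdot f)$. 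Such an element is a finite sum of tensors, possibly involving a retained correlator whose collar image is power suppressed or cancels against another pair. And moving injectivity into Assumption~\ref{ass:scheme-closed} changes the proposition rather than proving it.

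The paper does not try to classify the kernel. Its proof strategy ends with ``the glued reconstruction gives an inverse after applying $\pi_{\leq N}$.'' The reconstruction clause of Assumption~\ref{ass:overlap}, via Proposition~\ref{prop:subtraction-descent}, is read as a map back, $\Psi:\pi_{\leq N}\A_{\DIS}(\Ucoll)\to\pi_{\leq N}\Core$. It sends a collar class to the balanced class of the hard and collinear data extracted from its M\"obius-subtracted representative, i.e.\ the extraction $I_\Gamma\mapsto C_{\Gamma,R_h}\otimes F_{\Gamma,R_c}$ of Sec.~\ref{subsec:construct-collar}. Then $\pi_{\leq N}\Phi^{\Core}_{\coll}\circ\Psi=\id$ is the reconstruction statement. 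The identity $\Psi\circ\pi_{\leq N}\Phi^{\Core}_{\coll}=\id$ says that re-extracting hard and collinear data from an already factorized representative returns it up to balancing. Injectivity is then a consequence of having a left inverse, not of a relation count. On this route, the uniqueness content you were worried about becomes a concrete check: that the extraction respects the finitely many relation types defining $\A_{\DIS}(\Ucoll)$. It also buys you something in $\Van$. A continuous two-sided inverse gives an isomorphism outright. By contrast, the surjective-plus-injective criterion of Proposition~\ref{prop:diagrammatic-collar-criterion} needs an extra strictness input there, since in a quasi-abelian target a morphism with zero kernel and zero cokernel need not be invertible.
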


\begin{proof}[Proof strategy]
Assumption~\ref{ass:leading-regions} identifies the complete proof-level cover of the relevant collar and gives the finite stratified space of Sec.~\ref{sec:regime-construction}.  Assumption~\ref{ass:admissible-realization} ensures that all distributional and completed monoidal operations are legitimate in $\Van$.  Assumption~\ref{ass:ward} turns the collinear attachments into the Wilson-line PDF object.  Assumption~\ref{ass:soft-cancel} removes independent soft factors from the effective inclusive DIS carrier.  Assumption~\ref{ass:overlap}, together with Proposition~\ref{prop:subtraction-descent}, says that the M\"obius-subtracted hard-collinear contribution glues to the full collar observable after $N$-truncation.  Assumption~\ref{ass:scheme-closed} ensures that finite collinear scheme transformations act internally on the retained sector.  Therefore the naive hard-collinear collar map is balanced over $\mathsf S^{\leq N}$ and descends to the core.  The glued reconstruction gives an inverse after applying $\pi_{\leq N}$.
\end{proof}

Combining Criterion~\ref{prop:qcd-collar} with Theorem~\ref{thm:internal-dis} gives the formal shape of the DIS factorization theorem: analytic QCD proves the collar $N$-equivalence; filtered monoidal algebra turns it into the measured convolution formula.

\subsection{Pipeline diagram}

The division of labor can be summarized by
\begin{widetext}
\begin{equation}
\begin{tikzcd}[column sep=large,row sep=large]
\mathrm{Obs}^{\mathrm{ren,pert}}_{\mathrm{QCD}}
\arrow[r,"\mathrm{Asymp}_{P,q}"]
& \A_{\DIS}\in\Fact(K_{\DIS};\Fil(\Van))
\arrow[r,"\text{collar}"]
& \A_{\DIS}(\Ucoll)
& \Core^{\leq N}_{\DIS}
\arrow[l,swap,"\Phi^{\Core}_{\coll}", "\sim_{\leq N}"]
\arrow[r,"\Meas_{\Core}"]
& D .
\end{tikzcd}
\label{eq:pipeline}
\end{equation}
\end{widetext}
The only non-formal arrow in the factorization theorem is the proof that $\Phi^{\Core}_{\coll}$ is an $N$-equivalence for QCD in the chosen analytic target.

\section{Discussion}
\label{sec:discussion}

\subsection{Formal versus analytic}

The categorical theorem proves that a balanced collar equivalence implies a scheme-invariant factorized expression after measurement.  It does not by itself prove leading-region completeness, Ward identities, soft cancellation, or power suppression.  Those remain analytic QCD tasks.  The benefit is that all those tasks are aimed at a single precise statement,
\begin{equation}
\Core^{\leq N}_{\DIS}\simeq_{\leq N}\A_{\DIS}(\Ucoll),
\end{equation}
rather than at a presentation-dependent tensor product.  The proof object also records whether the proposed factorization is closed under finite scheme transformations and whether the measurement descends to the balanced quotient.

For inclusive DIS this formalism is not advertised as a shorter replacement for the established all-orders analytic proof.  DIS is deliberately used as a base case because its physics is clean enough that the proof object can be displayed.  The concrete advantages shown here are structural: Sec.~\ref{sec:proof-protocol} turns the proof into an explicit protocol, Sec.~\ref{sec:scheme-demo} proves presentation independence before measurement and exposes nonclosed parton sectors as invalid data, while Sec.~\ref{sec:descent} turns overlap subtraction into an incidence-algebra computation.  These are modest but nontrivial checks in the simplest setting; the larger purpose is to define the proof calculus in a case where QCD readers can compare every object with familiar DIS ingredients.

\subsection{A phenomenological reading}

A phenomenologist normally begins with a calculational presentation: a choice of factorization scheme, operator basis, coefficient functions, PDFs, evolution kernels, and a measurement projection.  The formalism proposed here changes the validation order.  Rather than first forming the measured convolution and then checking that scheme-dependent pieces cancel, one first lifts the presentation to the balanced core and only then measures:
\begin{equation}
(\bm C,\bm f,\Sch)\longrightarrow
\bm C\otimes_{\Sch}\bm f
\longrightarrow D .
\label{eq:phenom-order}
\end{equation}
In this order, scheme covariance, sector closure, overlap subtraction, and measurement compatibility are separate checks.  This is useful even when all physical ingredients are already known, because it prevents distinct issues from being hidden inside the same final convolution formula.

The framework should therefore be evaluated as proof infrastructure rather than as a new calculation of DIS coefficient functions.  A Collins-style proof, an OPE matching calculation, or an effective-theory matching calculation can all provide a presentation of the same proof object.  Once the proof object is valid, Theorem~\ref{thm:internal-dis} identifies the scheme-invariant measured observable without rechecking presentation-dependent cancellations case by case.  This is the sense in which the method is meant to be direct and seamless for factorization proofs: QCD supplies the typed collar equivalence, while the formalism supplies the invariant recomposition and consistency checks.

\subsection{Typed parameterizations and PDF extraction}
\label{subsec:typed-pdf-extraction}

The same validation order is useful for modern PDF extractions.  In neural-network approaches to PDFs, the fit parameters specify a family of functions or distributions that are then passed through evolution, coefficient functions and observable maps~\cite{ForteGarridoLatorrePiccione2002,DelDebbioEtAl2007NNPDF,BallEtAl2009NNPDF,ForteCarrazza2020,CarrazzaCruzMartinez2019,BallEtAl2022NNPDF40}.  In the present language such a family should be typed as a map
\begin{equation}
\theta\longmapsto f_\theta\in \PDF^{\leq N}
\end{equation}
whose image is stable under the interface algebra $\Sch^{\leq N}$.  The physical prediction is then not the unbalanced expression $C\otimes f_\theta$, but the composite
\begin{equation}
\theta\longmapsto
\Meas_{\Core}\bigl([C\otimes f_\theta]\bigr)
\in D .
\end{equation}
This formulation makes scheme closure, basis changes and measurement compatibility part of the model specification rather than after-the-fact checks on a fitted curve.

A neural architecture that fails to land in an $\Sch^{\leq N}$-stable module is not merely a suboptimal parametrization; it is not well typed for the factorization problem.  Conversely, an architecture or loss function that respects the right and left module actions, the balancing relation and the measurement map represents a scheme-invariant prediction space from the start.  Categorical deep learning provides a language for such compositional architectures, because it treats model components as typed maps whose algebraic constraints can be specified independently of a particular numerical implementation~\cite{GavranovicEtAl2024CategoricalDL}.  The proof-object perspective therefore suggests concrete validation tests for ML-based phenomenology: check module closure of the fitted family, check that scheme-related presentations descend to the same core element, and compare data only after the physical measurement has factored through the balanced core.

\subsection{Relation to standard presentations}

CSS-style proofs, the OPE, and effective-theory matchings provide explicit presentations of the same data: operator bases, matching coefficients, renormalization kernels, leading-region approximators, and scale evolution.  In the present formalism, such a presentation supplies a model for $\mathsf C$, $\mathsf F$, and $\mathsf S$, as well as a candidate for the collar map.  Changing scheme or basis changes the presentation but not the core.  This makes the framework compatible with standard calculations while separating presentation-dependent choices from invariant content.

\subsection{Why the stratified formulation is useful}

The stratified regime space is not introduced merely to rename the set of regions.  It gives a local-to-global structure to the proof.  Missing regions appear as missing strata.  Double counting appears as nontrivial overlap data.  Subtraction appears as descent.  Power accuracy appears as a filtration.  Scheme dependence appears as an interface algebra.  These are independent consistency checks on a proposed factorization proof object.  In particular, the framework distinguishes failures that can look similar in a final formula: failure of region completeness, failure of scheme closure, failure of descent/subtraction, failure of measurement compatibility, and failure of the collar power estimate.

\subsection{Obstruction diagnostics and diagnostic value}
\label{subsec:obstruction-diagnostics}

The present paper does not claim a new inclusive-DIS prediction.  Its value is diagnostic.  A failed attempt to construct a proof object does not merely say ``factorization failed''; it localizes the obstruction to a typed part of the construction.  Table~\ref{tab:obstructions} summarizes the main cases.

\begin{table*}[t]
\caption{Formal obstructions and their QCD interpretation.  The table is meant as a diagnostic guide for proposed factorization proofs, not as a list of new DIS effects.}
\label{tab:obstructions}
\begin{tabular}{@{}llll@{}}
\toprule
\parbox[t]{0.22\textwidth}{Formal obstruction} &
\parbox[t]{0.27\textwidth}{Mathematical signal} &
\parbox[t]{0.31\textwidth}{QCD interpretation} &
\parbox[t]{0.14\textwidth}{Typical repair}\\
\midrule
\parbox[t]{0.22\textwidth}{Incomplete region system} &
\parbox[t]{0.27\textwidth}{Some graph-level valuation has no image in $\Rreg^{\pp}_{\DIS}(N)$, or an omitted region contributes outside $F^{N+1}$.} &
\parbox[t]{0.31\textwidth}{Missing leading region, overlap face, or boundary contribution in the asymptotic analysis.} &
\parbox[t]{0.14\textwidth}{Enlarge $\Rreg$ and $K$.}\\[0.5ex]
\parbox[t]{0.22\textwidth}{Bad analytic target} &
\parbox[t]{0.27\textwidth}{Completed tensor products, strict quotients, or convolution pushforwards do not exist or are not continuous.} &
\parbox[t]{0.31\textwidth}{Distributional products or renormalized kernels have not been placed in a category where the formal operations are legal.} &
\parbox[t]{0.14\textwidth}{Refine $\Van$ or topology.}\\[0.5ex]
\parbox[t]{0.22\textwidth}{Nonclosed collinear sector} &
\parbox[t]{0.27\textwidth}{$\Sch^{\leq N}\cdot\PDF^{\leq N}\nsubseteq\PDF^{\leq N}$.} &
\parbox[t]{0.31\textwidth}{Missing PDF/operator channel or operator-mixing block.} &
\parbox[t]{0.14\textwidth}{Apply minimal closure.}\\[0.5ex]
\parbox[t]{0.22\textwidth}{Nonbalanced collar map} &
\parbox[t]{0.27\textwidth}{The square in Eq.~\eqref{eq:balance-square} fails.} &
\parbox[t]{0.31\textwidth}{The hard-collinear presentation depends on a finite scheme choice or counterterm convention.} &
\parbox[t]{0.14\textwidth}{Correct scheme action or module data.}\\[0.5ex]
\parbox[t]{0.22\textwidth}{Descent defect} &
\parbox[t]{0.27\textwidth}{The Cech obstruction or M\"obius-subtraction remainder survives after $\pi_{\leq N}$.} &
\parbox[t]{0.31\textwidth}{Overlap subtraction is incomplete or incompatible with nesting of approximators.} &
\parbox[t]{0.14\textwidth}{Refine subtraction poset.}\\[0.5ex]
\parbox[t]{0.22\textwidth}{Failed collar equivalence} &
\parbox[t]{0.27\textwidth}{Kernel or cokernel of $\pi_{\leq N}\Phi^{\Core}_{\coll}$ is nonzero.} &
\parbox[t]{0.31\textwidth}{An uncancelled or unabsorbed contribution remains at the claimed power.} &
\parbox[t]{0.14\textwidth}{Identify missing cancellation, region, or module.}\\[0.5ex]
\parbox[t]{0.22\textwidth}{Nonbalanced measurement} &
\parbox[t]{0.27\textwidth}{$\Meas$ does not factor through $\Core$.} &
\parbox[t]{0.31\textwidth}{The measured expression is scheme-dependent or the convolution target is incomplete.} &
\parbox[t]{0.14\textwidth}{Modify measurement target or projection.}\\
\bottomrule
\end{tabular}
\end{table*}

This diagnostic interpretation is the conservative way in which the framework can lead to new physics insight.  It does not predict a new contribution in the base DIS setting.  Rather, it gives a disciplined way to discover when an attempted factorization proof is missing physical data: a region, an operator sector, a subtraction, a cancellation, or a measurement variable.  In a proof assistant or AI-guided workflow, the same diagnostics become type errors, failed commutative diagrams, or nonvanishing finite-accuracy obstructions.

\subsection{Future refinements}

The present paper intentionally treats inclusive DIS away from special kinematic boundaries as the base case.  Future work can enlarge the same architecture by adding boundary faces, additional measurement variables, or extra module factors.  In this language such extensions are not changes to the formal theorem; they are changes to the region poset, the interface algebra, and the measurement functor.  The base DIS construction is therefore meant to be the first verified instance of a more general proof calculus, but the details of those enlargements are outside the scope of this paper.

A second direction is computational.  On the proof side, the finite list of regions, modules, maps, coequalizers and diagrams is suitable for proof-assistant formalization, with Lean used here only for a finite type-level model of the internal theorem~\cite{DeMouraUllrich2021Lean4,Mathlib2020}.  On the phenomenology side, the same typed interface can guide neural PDF extractions, differentiable implementations of convolution maps, and automated searches for missing sectors or inconsistent scheme actions.  In more complicated processes, a failed proof object could become a systematic tool for proposing which new region, operator or measurement variable must be added before a factorized description can close.

\section{Conclusion}
\label{sec:conclusion}

We have formulated inclusive DIS factorization as a scheme-invariant collar theorem for a constructible filtered factorization algebra on a stratified regime space.  The stratification is constructed from pinch-surface and scaling data by passing to a finite region poset and then to an order-complex compactification.  The interface algebra encodes finite collinear scheme transformations, and the physical carrier of the factorized expression is the balanced core, not the naive tensor product of a coefficient presentation and a PDF presentation.

The main formal result is concise.  If the balanced collar comparison map from the core to the collar value of the regime algebra is an $N$-equivalence, and if measurement is convolution-compatible, then the measured DIS observable is the measured image of the core modulo the power-suppressed filtration.  The usual $x$-space convolution formula is recovered as a representation of this invariant statement.  Collins-style subtraction appears as descent and M\"obius inversion on the proof-level region poset.

The framework gives a precise division of labor.  Analytic QCD supplies the collar equivalence by proving region completeness, Ward-identity reduction to Wilson lines, inclusive soft cancellation, overlap subtraction and power bounds.  The formalism supplies the invariant recomposition, finite scheme covariance, measurement descent and obstruction diagnostics.

The same structure also suggests a typed interface for computational workflows.  Proof assistants can check the finite diagrammatic core once the analytic obligations are stated, while neural-network PDF parametrizations can be constrained to land in scheme-stable modules and to be compared only after descent to the balanced core.  Thus the immediate output is a proof protocol for DIS, and the broader goal is a reusable calculus for validating factorization proofs, diagnosing missing physics and organizing hybrid analytic--computational extractions.

\acknowledgments

The author acknowledges the assistance of Overleaf AI (AI Assist, powered by Writefull) for language editing, including improvements to grammar, clarity, and flow of the manuscript.  The Lean theorem prover (Lean 4) was used to check a finite type-level model of the balanced-core construction and the internal diagram-chase theorem described in Appendix~\ref{app:lean}; this formalization does not constitute a verification of the analytic QCD collar-equivalence assumptions.  All outputs from both tools were reviewed, edited and verified by the author.
\bibliographystyle{JHEP}
\bibliography{references}

\appendix

\section{Homotopy-coherent and type-theoretic refinement}
\label{app:homotopy}

The main text uses an ordinary coequalizer because that is the clearest finite-accuracy model for inclusive DIS and because it gives a direct theorem in a 1-categorical proof assistant setting.  In a derived or homotopy-coherent target, for example chain complexes of distributional spaces or BRST complexes, the relative tensor product should be replaced by the derived balanced core
\begin{equation}
\mathsf C\otimes^{\mathbb L}_{\mathsf S}\mathsf F
:=\left|B_\bullet(\mathsf C,\mathsf S,\mathsf F)\right|,
\label{eq:derived-core}
\end{equation}
where $B_\bullet$ is the two-sided bar construction with
\begin{equation}
B_k(\mathsf C,\mathsf S,\mathsf F)
=\mathsf C\otimes \mathsf S^{\otimes k}\otimes\mathsf F.
\end{equation}
The face maps use the right action, multiplication in $\mathsf S$, and the left action; degeneracy maps use the unit of $\mathsf S$.  The strict coequalizer of Sec.~\ref{sec:core} is the $0$th truncation, or under flatness/exactness hypotheses the ordinary realization, of this bar construction.

In that refinement, a DIS factorization proof object is not only a tuple of objects and maps.  It also contains coherence witnesses: associativity and unit homotopies for the module actions, homotopies witnessing balancedness of the collar map, homotopies showing commutativity of the measurement diagrams, and higher coherences among these witnesses.  The collar condition becomes
\begin{equation}
\Phi^{\Core}_{\coll}:
\mathsf C^{\leq N}\otimes^{\mathbb L}_{\mathsf S^{\leq N}}
\mathsf F^{\leq N}\to \A_{\DIS}(\Ucoll)
\end{equation}
being an $N$-equivalence in the homotopy category, or equivalently an equivalence after applying the chosen finite-accuracy truncation functor.

This perspective is useful for type-theoretic or computer-assisted formulations.  Each component of the proof object has a type: region system, stratified space, filtered object, algebra, module, balanced map, collar equivalence, and measurement map.  Each commutative diagram is a term witnessing an equality or homotopy of composites.  Missing region strata, nonclosed module sectors, or nonbalanced measurements become ill-typed or uninhabited pieces of the proof object rather than informal omissions.  The present paper keeps the strict model in the main text, but Eq.~\eqref{eq:derived-core} indicates the natural homotopy-coherent generalization.

\section{Lean-oriented theorem skeleton and checked finite model}
\label{app:lean}

The internal theorem in Sec.~\ref{sec:formal-theorem} is deliberately
small: it is a statement about a finite list of objects, maps,
isomorphisms, quotient objects, and commutative diagrams.  This appendix
records the proof-assistant interface suggested by that theorem.  The
purpose is not to formalize QCD analysis in Lean.  Rather, it is to show
that the formal spine of the argument separates cleanly from the analytic
collar-equivalence obligation.

There are two levels of formalization.  The first is the categorical
skeleton, which is the natural form of the theorem in an additive
symmetric monoidal category with the stated colimits.  The second is a
self-contained Lean model in which categories are replaced by ordinary
types, truncations are supplied explicitly, and the balanced core is
implemented as a quotient type.  This second model is not the final
mathematical setting of this paper, but it checks that the diagrammatic
logic, the balanced quotient, the scheme-invariance identity, and the
internal factorization theorem are executable in Lean.

\subsection{Categorical skeleton}

For mechanization in the strict categorical setting, one may ignore the
geometric origin of the data and retain only the following objects and
maps in an additive symmetric monoidal category $\V$ with the colimits
and exactness properties.

\begin{widetext}
Data:
\begin{align}
&\A_{\mathrm{all}},\A_{\coll},\mathsf C,\mathsf F\in\Fil(\V),
\qquad
\mathsf S\in\mathrm{Alg}(\Fil(\V)),\nonumber\\
&\mathsf C\in\mathrm{RMod}_{\mathsf S},
\qquad
\mathsf F\in\mathrm{LMod}_{\mathsf S},
\qquad
\Core=\mathsf C\otimes_{\mathsf S}\mathsf F,\nonumber\\
&\spc:\A_{\mathrm{all}}\to\A_{\coll},
\qquad
\Phi:\Core\to\A_{\coll},
\qquad
\bO:\mathbf 1\to\A_{\mathrm{all}},\nonumber\\
&\Meas_{\mathrm{all}}:\pi_{\leq N}\A_{\mathrm{all}}\to D,
\qquad
\Meas_{\coll}:\pi_{\leq N}\A_{\coll}\to D,
\qquad
\Meas_{\Core}:\pi_{\leq N}\Core\to D.
\end{align}
Axioms:
\begin{align}
&\Meas_{\mathrm{all}}
=
\Meas_{\coll}\circ\pi_{\leq N}(\spc),\nonumber\\
&\pi_{\leq N}\Phi
\text{ is an isomorphism},\nonumber\\
&\Meas_{\coll}\circ\pi_{\leq N}\Phi
=
\Meas_{\Core}.
\end{align}
Conclusion:
\begin{equation}
\Meas_{\mathrm{all}}\circ\pi_{\leq N}(\bO)
=
\Meas_{\Core}\circ(\pi_{\leq N}\Phi)^{-1}
\circ\pi_{\leq N}(\spc)
\circ\pi_{\leq N}(\bO).
\end{equation}
\end{widetext}

This is Theorem~\ref{thm:internal-dis}.  Its proof is rewriting by the
first axiom, insertion of the inverse of the truncated collar
isomorphism, and rewriting by the third axiom.

\subsection{Type-level Lean model}

A first executable model can be made without importing the category
theory library.  In this model, a filtered object is represented by a
carrier type, a family of truncated types, and a projection to each
truncation:
\begin{equation}
X
\quad\leadsto\quad
\bigl(
X_{\mathrm{car}},
\{\pi_{\leq N}X\}_{N\in\mathbb N},
\pi_N:X_{\mathrm{car}}\to \pi_{\leq N}X
\bigr).
\end{equation}
A filtered map $f:X\to Y$ is represented by an ordinary function on
carriers together with induced maps on every truncation,
\begin{equation}
f_N:\pi_{\leq N}X\to\pi_{\leq N}Y,
\end{equation}
and a commuting square
\begin{equation}
\pi_N^Y\circ f = f_N\circ\pi_N^X.
\end{equation}
Thus the Lean model does not construct cokernels.  It treats the
truncated object $\pi_{\leq N}X$ as explicit finite-accuracy data.

The interface algebra is represented by a monoid object at the level of
types:
\begin{equation}
(\mathsf S_N,1,\cdot),
\end{equation}
with a right action on coefficient presentations and a left action on
PDF/correlator presentations,
\begin{equation}
\mathsf C_N\curvearrowleft \mathsf S_N,
\qquad
\mathsf S_N\curvearrowright \mathsf F_N.
\end{equation}
The Lean structure includes the monoid laws and the left and right module
laws.  The balanced core is then implemented as the quotient type
\begin{equation}
\Core_N
=
\operatorname{Quot}\bigl((c\cdot s,f)\sim(c,s\cdot f)\bigr).
\end{equation}
The quotient map is denoted
\begin{equation}
q:\mathsf C_N\times\mathsf F_N\to\Core_N.
\end{equation}
Lean verifies the defining balancing identity
\begin{equation}
q(c\cdot s,f)=q(c,s\cdot f).
\end{equation}

A balanced coefficient--PDF pairing is represented by a function
\begin{equation}
\varphi:\mathsf C_N\times\mathsf F_N\to D
\end{equation}
satisfying
\begin{equation}
\varphi(c\cdot s,f)=\varphi(c,s\cdot f).
\end{equation}
Using the quotient eliminator, Lean constructs the descended measurement
\begin{equation}
\overline{\varphi}:\Core_N\to D
\end{equation}
and proves
\begin{equation}
\overline{\varphi}(q(c,f))=\varphi(c,f).
\end{equation}
This is the type-level form of the universal property of the balanced
core: every balanced evaluation factors through $\Core_N$.

\subsection{Scheme invariance in the Lean model}

The Lean model also checks finite scheme invariance.  A scheme change is
represented by an invertible element
\begin{equation}
z\in\mathsf S_N,
\qquad
z^{-1}z=1,
\qquad
zz^{-1}=1.
\end{equation}
The usual transformation
\begin{equation}
f'=z\cdot f,
\qquad
c'=c\cdot z^{-1}
\end{equation}
then satisfies
\begin{equation}
q(c',f')=q(c,f).
\end{equation}
The quotient calculation is the Lean analogue of the main text
scheme-covariance argument in Proposition~\ref{prop:scheme-covariance-core},
or equivalently of the core-level scheme-change calculation in
Eq.~\eqref{eq:scheme-demo-core}:
\begin{align}
q(c\cdot z^{-1},z\cdot f)
&=
q(c,z^{-1}\cdot(z\cdot f))
\nonumber\\
&=
q(c,(z^{-1}z)\cdot f)
\nonumber\\
&=
q(c,1\cdot f)
\nonumber\\
&=
q(c,f).
\end{align}
Consequently, every balanced measurement satisfies
\begin{equation}
\overline{\varphi}\bigl(q(c\cdot z^{-1},z\cdot f)\bigr)
=
\overline{\varphi}\bigl(q(c,f)\bigr),
\end{equation}
and equivalently
\begin{equation}
\varphi(c\cdot z^{-1},z\cdot f)=\varphi(c,f).
\end{equation}
This is the proof-assistant version of the statement that scheme-related
coefficient/PDF presentations define the same physical core element.

\subsection{Lean version of the internal DIS theorem}

The checked finite model then adds the DIS data
\begin{equation}
\pi_{\leq N}\A_{\mathrm{all}},
\qquad
\pi_{\leq N}\A_{\coll},
\qquad
\Core_N,
\end{equation}
a truncated specialization map
\begin{equation}
\mathrm{spc}_N \colon 
\pi_{\leq N} \A_{\mathrm{all}} 
\to 
\pi_{\leq N} \A_{\mathrm{coll}},
\end{equation}
and an isomorphism
\begin{equation}
\Phi_N:\Core_N\xrightarrow{\sim}\pi_{\leq N}\A_{\coll}.
\end{equation}
The global and collar measurements are functions
\begin{equation}
\Meas_{\mathrm{all}}:\pi_{\leq N}\A_{\mathrm{all}}\to D,
\qquad
\Meas_{\coll}:\pi_{\leq N}\A_{\coll}\to D,
\end{equation}
and the core measurement is the lifted balanced pairing
\begin{equation}
\Meas_{\Core}=\overline{\varphi}:\Core_N\to D.
\end{equation}
The Lean theorem states that if
\begin{align}
\mathsf{Meas}_{\mathrm{all}}(a)
&=
\mathsf{Meas}_{\mathrm{coll}}\!\left(\operatorname{spc}_{N}(a)\right),
\\
\mathsf{Meas}_{\mathrm{coll}}\!\left(\Phi_{N}(\xi)\right)
&=
\overline{\varphi}(\xi).
\end{align}
then for every observable point $x$,
\begin{equation}
\mathsf{Meas}_{\mathrm{all}}\!\bigl(\pi_{N}{\bO}(x)\bigr)
=
\overline{\varphi}\!\left(
{\Phi}_{N}^{-1}\!\left(
\operatorname{spc}_{N}\!\left(\pi_{N}{\bO}(x)\right)
\right)
\right).
\end{equation}
The proof is the same three-line diagram chase as
Theorem~\ref{thm:internal-dis}: rewrite through the collar, insert
$\Phi_N\Phi_N^{-1}=\mathrm{id}$, and rewrite using measurement
compatibility.

\subsection{Concrete toy instance}

The same Lean file can be populated by a finite toy proof object.  In
one such model,
\begin{equation}
\mathsf C_N=\mathsf S_N=\mathsf F_N=\{0,1\},
\end{equation}
with the interface monoid given by the two-element XOR operation.  The
left and right actions are the same XOR action, and the pairing
\begin{equation}
\varphi(c,f)=c\oplus f
\end{equation}
is balanced.  The nontrivial scheme element is its own inverse.  Taking
the collar equivalence to be the identity on the quotient core gives a
fully populated finite DIS datum.  Lean verifies:
\begin{align}
\mathsf{Meas}_{\mathrm{all}}\!\bigl(\pi_{N}{\bO}(x)\bigr)
&=
\mathsf{Meas}_{\Core}\!\left(
{\Phi}_{N}^{-1}\!\left(
\operatorname{spc}_{N}\!\left(\pi_{N}{\bO}(x)\right)
\right)
\right),
\\
q\!\left(c\cdot z^{-1},z\cdot f\right)
&=
q(c,f),
\\
\varphi\!\left(c\cdot z^{-1},z\cdot f\right)
&=
\varphi(c,f).
\end{align}
The toy instance is not a model of QCD.  Its role is to check that the
formal proof object is internally consistent and that the balanced core,
scheme invariance, and internal collar theorem can be executed in Lean.

\subsection{What remains beyond the finite Lean model}

The self-contained Lean model deliberately avoids several structures
needed for a full formalization:
\begin{enumerate}
\item cokernels defining $\pi_{\leq N}$ in a filtered abelian or
quasi-abelian category;
\item coequalizers and relative tensor products in a symmetric monoidal
category rather than quotient types of sets;
\item completed tensor products and strict quotients in the analytic
target $\V_{\mathrm{an}}$;
\item constructible factorization algebras on conically stratified
spaces;
\item the QCD proof that
$\Phi_{\coll}^{\Core}$ is an $N$-equivalence.
\end{enumerate}
Thus the Lean model verifies the formal core, not the analytic QCD
realization.  A full proof-assistant development would replace the
type-level quotient by categorical coequalizers, replace explicit
truncated types by cokernel-defined truncations, and then add the
geometric and analytic hypotheses required to construct the DIS regime
factorization algebra.  The finite model nevertheless confirms the
central structural claim: once the balanced core, the
truncated collar isomorphism, and measurement compatibility are supplied,
the factorized measured expression follows by formal computation.

\section{Notation dictionary}
\label{app:dictionary}

\begin{description}
\item[$K_{\DIS}$] Compactified stratified space of DIS asymptotic regimes.
\item[$\Rreg^{\pp}_{\DIS}$] Proof-level region poset, including hard, collinear, soft, jet, and overlap regions.
\item[$\Ucoll$] Collar neighborhood of the incoming-collinear boundary stratum.
\item[$\A_{\DIS}$] Constructible filtered regime factorization algebra.
\item[$\mathsf C$, $\mathsf F$] Hard coefficient module and collinear PDF/correlator module.
\item[$\mathsf S$] Interface algebra of finite collinear scheme transformations.
\item[$\Core$] Balanced carrier $\mathsf C\otimes_{\mathsf S}\mathsf F$.
\item[$\Phi^{\Core}_{\coll}$] Balanced collar comparison map.
\item[$\pi_{\leq N}$] Quotient modulo the power-suppressed piece $F^{N+1}$.
\item[$D$] Convolution algebra of measured momentum-fraction distributions.
\item[$\Van$] Admissible analytic target category for completed distributional QCD objects.
\end{description}

\end{document}